
\documentclass[10pt, reqno]{amsart}
\usepackage{amsmath, amssymb}
\usepackage[bitstream-charter]{mathdesign}
\usepackage[T1]{fontenc}

\usepackage{mathrsfs}
\usepackage[arrow,matrix,curve,cmtip,ps]{xy}

\usepackage{amsthm}

\usepackage{float}
\usepackage{hyperref}
\usepackage{graphics}
\usepackage{graphicx}
\usepackage{verbatim}
\usepackage{multirow}
\usepackage{tikz}
\usepackage{enumerate}
\usepackage{eufrak}
\interfootnotelinepenalty=10000

\newcommand*\mycirc[1]{%
  \begin{tikzpicture}
    \node[draw,circle,inner sep=1pt] {#1};
  \end{tikzpicture}}


\allowdisplaybreaks

\newtheorem{theorem}{Theorem}[section]
\newtheorem{lemma}[theorem]{Lemma}
\newtheorem{proposition}[theorem]{Proposition}
\newtheorem{corollary}[theorem]{Corollary}

\newtheorem*{theorem*}{Theorem}
\theoremstyle{remark}

\newtheorem{definition}[theorem]{Definition}

\numberwithin{equation}{section}


\newcommand{\R}{\mathbb{R}}


\begin{document}
\title{The N-player war of attrition in the limit of infinitely many players}

\author{Peter Helgesson, Bernt Wennberg}

\address{Department of Mathematics \\ Chalmers University of Technology \\ SE41296 Gothenburg \\ Sweden}
\email{helgessp@chalmers.se, wennberg@chalmers.se}

\date{\today}

\subjclass[2010]{91A06}

\keywords{game theory, war of attrition, evolutionary stable strategy, n-player}

\begin{abstract}

The \textit{War of Attrition} is a classical game theoretic model that was first introduced to mathematically describe certain non-violent animal behavior. The original setup considers two participating players in a one-shot game competing for a given prize by waiting. This model has later been extended to several different models allowing more than two players. One of the first of these $N$-player generalizations was due to J. Haigh and C. Cannings in \cite{HaigCannings} where two possible models are mainly discussed; one in which the game starts afresh with new strategies each time a player leaves the game, and one where the players have to stick with the strategy they chose initially. The first case is well understood whereas, for the second case, much is still left open.\\
\indent In this paper we study the asymptotic behavior of these two models as the number of players tend to infinity and prove that their time evolution coincide in the limit. We also prove new results concerning the second model in the $N$-player setup.

\end{abstract}

\maketitle

\tableofcontents

\section{Introduction}
Game theory has ever since the pioneering works by J. von Neumann developed in to an important tool in the study of various areas of research such as economical science, computer science, political science, biology, social science and even in philosophy. A common point of view is that game theory constitutes a theory of rational and strategic decision making describing how rational players would optimize their play, often in terms of Nash-equilibrium. During the years especially economical science has earned a lot of success applying game theory in various situations, and this has resulted in several Nobel-prizes. The latest of these was given to Alvin E. Roth and Loyd S. Shapley in 2012 "for the theory of stable allocations and the practice of market design". However, when applying game theory to problems in biology and animal behavior it is obvious that the common view point of having rational players is insufficient. Even though many situations in biology, in principle, could be described as some kind of game, one can not consider animals as being actively rational. One rather expect animal behavior to be in agreement with game theory as a consequence of natural selection in evolution. In 1973 in \cite{MaynardPrice} J. Maynard Smith and G. R. Price introduced the notion of \textit{Evolutionary Stable Strategy}, in short ESS, that was to take the same place in game theoretic biology as Nash-equilibrium had had in game theoretic economy. The ESS serves as the natural candidate for what type of animal behavior that evolution eventually would produce by natural selection. In 1974, published in \cite{Maynard}, J. Maynard Smith developed a game theoretic, non-violent, conflict scenario called \textit{War of Attrition} to describe potential animal behavior in e.g. territorial competition. The model considers two players competing for one single prize $V \in \R_+$ by waiting. The cost of waiting is modeled as proportional to the duration of the game, and it is payed in the same amounts by both parts when the first player decides to leave. The remaining player wins the game and collects the prize $V$. In \cite{BishopCannings} it was proven by D.T. Bishop and C. Cannings that the war of attrition has one unique mixed ESS given by choosing waiting time at random from an exponential probability distribution having mean $V$. In 1999 John Maynard Smith, together with E. Mayr and G. C. Williams, was honored with the Crafoord prize for his work in evolutionary biology in connection with game theory.\\
\indent In 1989 J. Haigh and C. Cannings in \cite{HaigCannings} generalized the two player model of the war of attrition to models involving several players. One could of course think of many ways of constructing such generalizations, but the ones considered in \cite{HaigCannings} are probably the most natural extensions. In this text we will refer to these models as the \textit{dynamic model} and the \textit{static model}\footnote{In \cite{HaigCannings} the dynamic model is called Model C and the static model is called Model D.}. The $N$-player dynamic model of the war of attrition is a repetitive game in $N-1$ rounds in which one player drops out of the game in each round until there is only two players left in the final round. Between the rounds the remaining players are allowed to change their strategies for the next round. The dynamic model is well understood and the existence and uniqueness of an ESS is proven in \cite{HaigCannings} under very general conditions.
\\ \indent In the $N$-player static model of the war of attrition all participating players choose their waiting time at the beginning of the game. Each of them are then bound to stick to their chosen waiting time. Hence the static model is a one-shoot game, i.e. the outcome of the game is known as soon as all players have made their choice. In contrast to the dynamic model far less is know about how to play the static model. In \cite{HaigCannings} it is proven by specific examples that the static model admit a unique ESS in some cases while in other cases it does not, and much is left open.\\
\indent The war of attrition has through time developed into one of the most classic game theoretic models. It has been studied from a different interesting point of view in \cite{Lundh}.

\section{Preliminaries and Introductory Results}

We begin with a heuristic discussion. For the simplest setup of the war of attrition, from now on WA, (see \cite{Maynard}) we consider a two player game in which the contestants are competing for a prize $V > 0$ by waiting. There is a cost connected to the duration $t$ of the game modelled linearly as $-t$. The game ends once one of the players decide to withdraw by paying the collected time cost and leave the price $V$ to the opponent player, who also pays the time cost. If we name the players by $x$ and $y$, and their corresponding waiting times by $\tau_x$ and $\tau_y$, we get the WA pay-off function for player $x$ as:
\begin{equation}
    J_x(\tau_x, \tau_y) := \left \{
    \begin{array}{rl}
        V -\tau_y,& \text{ if } \tau_x > \tau_y\\
        -\tau_x,& \text{ if } \tau_x < \tau_y.
    \end{array} \right.
\end{equation}

\noindent In the case of equal waiting times we define
\begin{equation}
    J_x(\tau,\tau) := \frac{V}{2} - \tau,\hspace{1.3cm} \forall \tau \in [0,\infty).
\end{equation}

\noindent It is clear that this setup of the game can not have a pure strategy ESS, or even a pure strategy Nash-equilibrium, since if there were such a strategy it would be given by a fixed waiting time $\bar{\tau}$. It would therefore always be possible to brake this strategy by waiting just a bit longer than $\bar{\tau}$. However, according to \cite{BishopCannings}, there is a unique mixed ESS given by letting $\tau \sim \text{exp}(1/V)$, i.e. letting $\tau$ be randomly distributed with an exponential density of mean $V$. As mentioned in the introduction, in \cite{HaigCannings} J. Haigh and C. Cannings generalized the above two player setup of the WA to two different models allowing $N$ players; one repetitive game, the dynamic model, and one one-shot game, the static model. In both cases one consider a sequence $\left \{ V_k \right \}_{k=1}^N$ of positive numbers representing the prizes that the $N$ players are competing for. In this text we will assume this prize sequence to be an increasing sequence of real positive numbers, i.e. $0 \leq V_1 < V_2 < ... < V_N$.\\
\indent The dynamic $N$-player model is divided into $N-1$ distinct rounds. In the beginning of the first round all the $N$ players, independently of each other, choose their waiting times. Then the players wait and the contestant having the least waiting time $\tau_{(1)}^1$ leaves the game by receiving the prize $V_1$ and paying the time cost $\tau_{(1)}^1$. The remaining $N-1$ players also pay the cost $\tau_{(1)}^1$ and proceed into the second round where the game starts afresh and proceeds as in the first round, playing for the prize $V_2$ instead. The game goes on until the $(N-1)$'th player leaves in the final round by receiving $V_{N-1}$ and paying $\tau_{(1)}^{N-1}$, thus leaving the final player left to claim the prize $V_N$ for a total cost of $\tau_{(1)}^1 + \tau_{(1)}^2 + ... + \tau_{(1)}^{N-1}$.\\
\indent It was proven in \cite{HaigCannings} that there exists a unique mixed ESS for each round $k = 1, 2, ..., N-1$ in the above dynamic model by choosing waiting time according to an exponential distribution with mean $(N-k)(V_{k+1} - V_k)$. In what follows we will use this result to investigate $N$-player limit of the dynamic model in a "sketchy manner". Given the increasing sequence of prizes $\left \{ V_k \right \}_{k=1}^N$ we associate a piecewise linear function, $V^N(x)$ on $x \in [0,1]$, by declaring $V^N(0) := 0$ and $V^N(k/N) := V_k$ so that every pair $\left\{ (k/N, V^N(k/N)), ((k+1)/N, V^N((k+1)/N) \right\}$ is joined together by a line segment. It is clear that the function $V^N$ may have a very bad behaviour in the limit as $N\longrightarrow \infty$. For instance if $V_k := k$ we would get an a.e. unbounded function in the limit. However, if we suppose that the prize sequence is such that $V^N \longrightarrow V \in \mathcal{C}^1([0,1])$ as $N\longrightarrow \infty$ the dynamic $N$-player WA will have meaning in the limit\footnote{Indeed, this can be accieved by starting from an increasing function $g \in \mathcal{C}^1([0,1])$ and simply define the sequence $\left( V_k \right)_{k=1}^N$ as $V_k := g(k/N)$.} and we can investigate the limiting behaviour. If we denote the density function of the mixed ESS of round $k$ by $f^N_k$, and let $k = \lceil qN \rceil$ for some fixed $q \in [0,1]$ ($\lceil x \rceil := \mbox{the smallest integer bigger than } x$), we have that
\begin{align}
    f^N_k(\tau) &:= \frac{1}{(N-k)\left( V\left( \frac{k+1}{N}\right) - V\left( \frac{k}{N}\right) \right)} \text{exp}\left\{-\frac{\tau}{(N-k)\left( V\left( \frac{k+1}{N}\right) - V\left( \frac{k}{N}\right) \right)}\right\}\nonumber \\
        & = \frac{1}{(1-\frac{k}{N})\left( \frac{V\left( \frac{k+1}{N}\right) - V\left( \frac{k}{N}\right)}{1/N} \right)} \text{exp} \left\{ -\frac{\tau}{(1-\frac{k}{N})\left( \frac{V\left( \frac{k+1}{N}\right) - V\left( \frac{k}{N}\right)}{1/N} \right)} \right\} \longrightarrow \nonumber\\
        & \longrightarrow \frac{1}{(1-q)V'(q)}e^{-\frac{\tau}{(1-q)V'(q)}} =: f(\tau),
        \label{lim1}
\end{align}
\noindent as $N \longrightarrow \infty$. The number $q$ represents the fraction of players that, at the moment, have left the game. Of course, in this setup $q$ depends on the time $t\in \mathbb{R}_+$ and we would like to analyze its time evolution and how it relates to $V(x)$. For this we introduce the \textit{mean field density function} $m(t,\tau)$ describing the fraction of players still left at time $t$ after the game has started, with chosen waiting times $\tau$. Thus $m$ will lose mass as players are quitting according to
\begin{equation}
    \int_0^{\infty} m(t,\tau) d\tau = 1-q(t),
\end{equation}

\noindent and since the $\tau$-marginal should be exponentially distributed like (\ref{lim1}) for every $q$ we get
\begin{equation}
    m(t,\tau) := \frac{1}{V'(q(t))}e^{-\frac{\tau}{(1-q(t))V'(q(t))}}.
\end{equation}

\noindent Thinking of $m(t,.)$ as an approximation of the distribution of players in the $N$-player game at time $t$, let $n_0^t$ denote the number of players in the vicinity of $\tau = 0$ at $t$ (i.e. having $\tau \in [0,d\tau]$). Then
\begin{equation*}
    m(t,0)d\tau \approx \frac{n_0^t}{N}.
\end{equation*}

\noindent On the other hand, since $t$ and $\tau$ are of the same time scale and $m$ is continuous we should also have that
\begin{equation}
    m(t,0)dt \approx \frac{n_0^t}{N}
\end{equation}

\noindent and therefore $m(t,0)d\tau = m(t,0)dt$. This means that the number of players having waiting times $\tau \in [0,d\tau]$ equals to the number of players that will leave the game in the interval $[t, t+dt]$. By this we get that
\begin{equation}
    q(t) = \int_0^t m(s,0)ds = \int_0^t \frac{1}{V'(q(s))} ds,
\end{equation}

\noindent which in turn yields a differential equation for the dynamics of $q(t)$ as
\begin{equation}
    \frac{d}{dt}V(q(t)) = V'(q(t))\dot{q}(t) = 1.
    \label{diff1}
\end{equation}

\noindent Since $q(0)=0$, equation (\ref{diff1}) suggests that
\begin{equation}
    V(q(t)) - V(0) = t
    \label{time}
\end{equation}

\noindent and since the game by definition will end when all the players are out, i.e. when $q(t)=1$, we get by (\ref{time}) that the total duration $T$ of the game is given by the formula
\begin{equation}
    T := V(1) - V(0).
    \label{TN}
\end{equation}

\noindent In the following lemma we show that this result is consistent with the corresponding result one would get from \cite{HaigCannings} in the limit of $N$.

\begin{lemma}
Let $T_N := \sum_{k=1}^{N-1} \mathbb{E}[\tau_{(1)}^k]$, where each $\tau_{(1)}^k$ is a random variable with density given by the first order statistics of $(N-k+1)$ number of exponentially distributed random variables with parameter $(N-k) \left( V\left(\frac{k+1}{N}\right) - V\left(\frac{k}{N}\right) \right)$. Then
\begin{equation*}
    T_N \longrightarrow V(1)-V(0) = T,
\end{equation*}
as $N\longrightarrow \infty$.
\label{lemma1}
\end{lemma}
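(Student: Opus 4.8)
The plan is to evaluate each expectation $\mathbb{E}[\tau_{(1)}^k]$ in closed form, sum them, and then extract a telescoping main term plus a negligible remainder.

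First I would use the elementary fact that the minimum of $n$ independent exponential random variables each of mean $\mu$ is itself exponential of mean $\mu/n$. In round $k$ there are $N-k+1$ players remaining, each with waiting time of mean $(N-k)\left(V\left(\frac{k+1}{N}\right)-V\left(\frac{k}{N}\right)\right)$. Abbreviating $\Delta_k := V\left(\frac{k+1}{N}\right)-V\left(\frac{k}{N}\right)$, this gives
\begin{equation*}
    \mathbb{E}[\tau_{(1)}^k] = \frac{(N-k)\,\Delta_k}{N-k+1}
    = \Delta_k - \frac{\Delta_k}{N-k+1},
\end{equation*}
so that
\begin{equation*}
    T_N = \sum_{k=1}^{N-1}\Delta_k \;-\; \sum_{k=1}^{N-1}\frac{\Delta_k}{N-k+1}.
\end{equation*}

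The first sum telescopes to $V(1)-V\left(\frac{1}{N}\right)$, which converges to $V(1)-V(0)$ by continuity of $V$. For the second sum, I would exploit that $V\in\mathcal{C}^1([0,1])$ has bounded derivative, say $|V'|\leq M$; the mean value theorem then yields $|\Delta_k|\leq M/N$. After re-indexing $j=N-k+1$ this gives the bound
\begin{equation*}
    \left|\sum_{k=1}^{N-1}\frac{\Delta_k}{N-k+1}\right|
    \leq \frac{M}{N}\sum_{j=2}^{N}\frac{1}{j}
    = \frac{M}{N}\left(H_N-1\right),
\end{equation*}
where $H_N$ denotes the $N$-th harmonic number. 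Since $H_N=O(\log N)$, the remainder is $O\!\left(\frac{\log N}{N}\right)$ and hence vanishes as $N\to\infty$, leaving $T_N\to V(1)-V(0)=T$.

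I do not anticipate a genuine obstacle: the argument is elementary once one spots the telescoping structure hidden in $\frac{N-k}{N-k+1}=1-\frac{1}{N-k+1}$. The only mild subtlety is controlling the remainder uniformly in $N$, but the $\mathcal{C}^1$ assumption makes each increment $\Delta_k$ of size $O(1/N)$, so the resulting harmonic-type sum is easily dominated.
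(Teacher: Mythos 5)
Your proof is correct and follows essentially the same route as the paper: compute $\mathbb{E}[\tau_{(1)}^k] = \frac{N-k}{N-k+1}\Delta_k$ from the standard formula for the minimum of i.i.d.\ exponentials, and exploit the telescoping of $\sum_k \Delta_k$. Your explicit splitting $\frac{N-k}{N-k+1} = 1 - \frac{1}{N-k+1}$ together with the harmonic-sum bound $O(\log N/N)$ on the remainder is in fact more rigorous than the paper's justification, which merely asserts that the quotient tends to $1$ ``independently of $k$'' (a claim that is not uniform in $k$ --- for $k=N-1$ the quotient equals $1/2$) and thereby glosses over precisely the point you handle carefully.
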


\begin{proof}
If $X_1, ..., X_N \sim \text{exp}(\lambda)$ and $X_{(1)} := \text{min}\{ X_1, ..., X_N \}$ we have that $\mathbb{E} [X_{(1)}] = 1/(\lambda N)$. Thus, for the sequence $(\tau_{(1)}^k)_{k=1}^{N-1}$ we have
\begin{equation*}
    \mathbb{E}[\tau_{(1)}^k] = \frac{N-k}{N-k+1}\left( V\left( \frac{k+1}{N} \right) - V\left( \frac{k}{N} \right) \right)
\end{equation*}

\noindent so
\begin{align*}
    T_N &= \sum_{k=1}^{N-1} \frac{N-k}{N-k+1}\left( V\left( \frac{k+1}{N} \right) - V\left( \frac{k}{N} \right) \right)\\
        &= \sum_{k=1}^{N-1} \frac{1-\frac{k}{N}}{1-\frac{k}{N}+\frac{1}{N}}\left( V\left( \frac{k+1}{N} \right) - V\left( \frac{k}{N} \right) \right)
        \longrightarrow V(1)-V(0)
\end{align*}
\noindent as $N\longrightarrow \infty$, since the quotient $(1-k/N)/(1-k/N-1/N) \longrightarrow 1$ independently of $k$, and we end up with a telescopic sum.
\end{proof}
In the dynamic $N$-player WA the quantity $T_N$ is a natural measure of the expected total time a typical game will last. In the first round all players chose their waiting times according to an exponential distribution with mean $(N-1)(V_2-V_1)$. If $\tau_i^1$ denotes the waiting time of player $i$ in the first round, we get a sequence of waiting times $\tau_1^1, \tau_2^1, ..., \tau_N^1$. The first round of the game will therefor last for a time $\tau_{(1)}^1 := \text{min}(\tau_1^1, \tau_2^1, ..., \tau_N^1)$, that is, the first order statistics of the waiting time sequence. In this case, since $\tau_1^1, \tau_2^1, ..., \tau_N^1$ are i.i.d. and exponentially distributed, it is well known that $\mathbb{E}[\tau_{(1)}^1] = (N-1)(V_2-V_1)/N$ (see e.g. \cite{David}). After the first round the game starts afresh and the $(N-1)$ players (independently of the previous round) chose their new waiting times $\tau_1^2, \tau_2^2, ..., \tau_{N-1}^2$ according to an exponential distribution, now having mean $(N-2)(V_3-V_2)$. The expected time of round two is again derived by first order statistics. The game continues like this until the $(N-1)$'th player leaves and the final prize $V_N$ is collected by the "winner". Thus the expected duration of the $N$-player game is precisely given by $T_N$. The result of Lemma \ref{lemma1} indicates consistency between the heuristic arguments that led to (\ref{TN}) and \cite{HaigCannings}.\\
\indent Given an increasing $\mathcal{C}^1$-prize function on the unit interval it is an easy task to solve the ode in (\ref{diff1}) and thereby derive the mean field density of players $m(t,\tau)$. Below we have included some numerical results of $m(t,\tau)$ and $q(t)$ for some different choices of prize function.
\begin{center}
\includegraphics[scale=0.4]{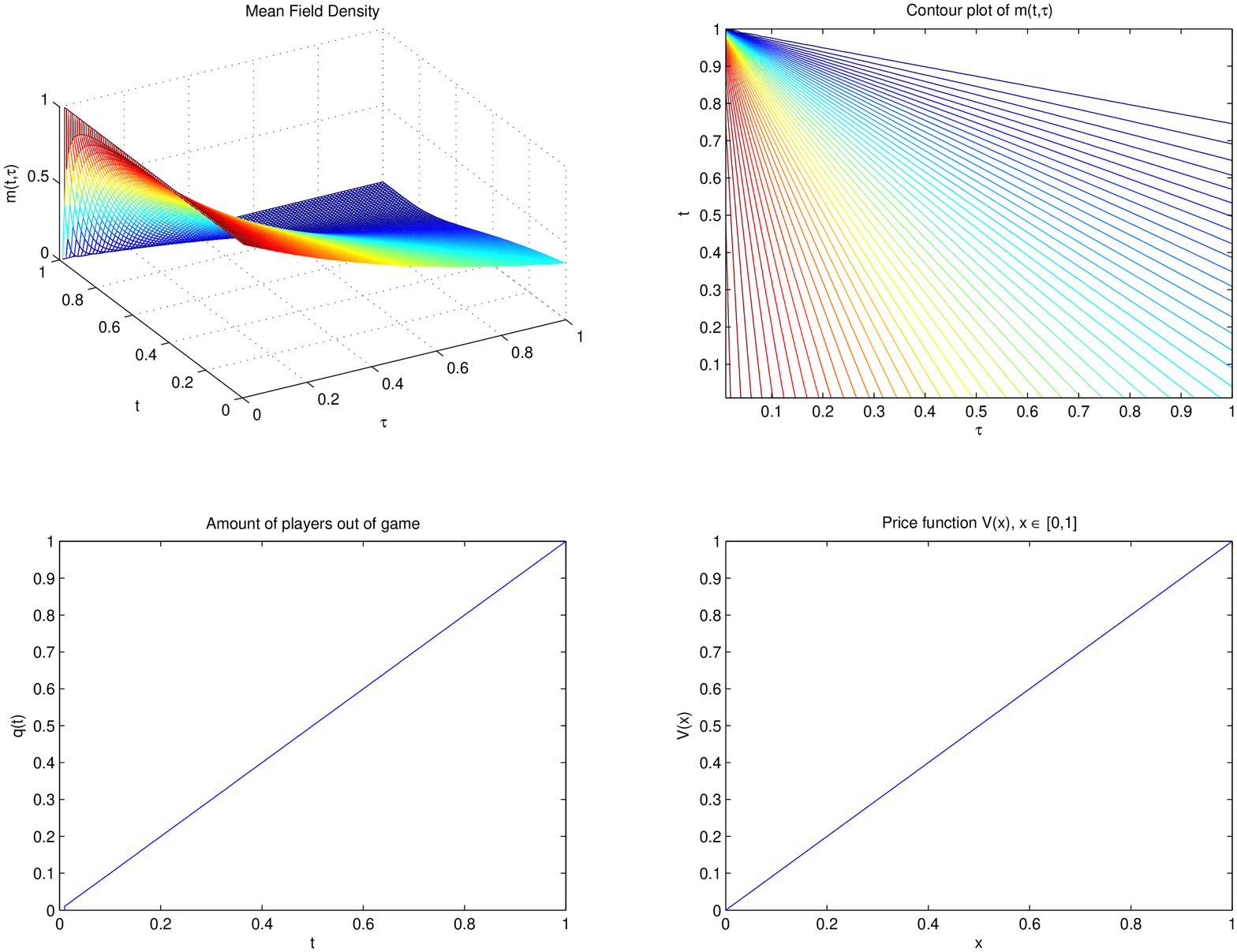}
\end{center}
\begin{center}
\includegraphics[scale=0.4]{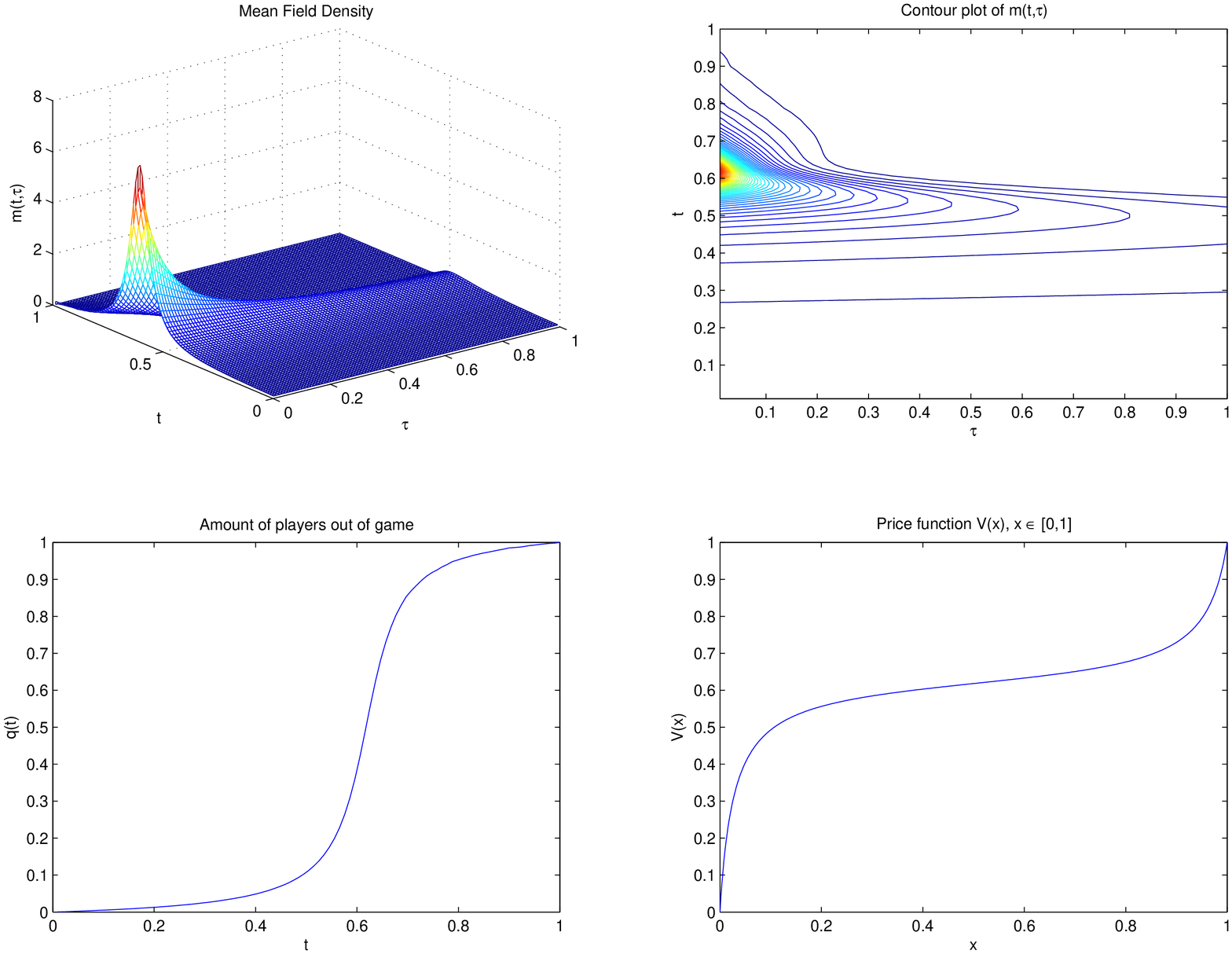}
\end{center}
The results above concerning the time evolution of $q$ are of course based on non rigorous arguments, but the main conclusion of $q(t)$ being the inverse function of $V(x)$ makes sense from a game theoretic point of view. To be more precise; let $\bar{t} \in [0,T]$ be a fixed point of time representing any pure strategy in the limiting dynamic WA with prize function $V(x)$. If all players in the game are playing according to $q(t)$ the payoff of playing any pure strategy $\bar{t}$ would be $V(q(\bar{t})) - \bar{t} = V(0)$ regardless of $\bar{t}$. In other words, $q(t)$ represents a Nash-equilibrium in the limit of infinitely many players. In the next section we investigate this more thoroughly.

\section{Convergence in the $N$-player limit of the dynamic model}
\label{section3}
In this section we consider the dynamic $N$-player generalization of the WA according to \cite{HaigCannings} and its behaviour as the number of players grows to infinity. We will assume the sequence $\{V_k \}_{k = 1}^N$ to be positive and strictly increasing and consider an $(N-1)$-round game (playing for $V_k$ in the $k$'th round) in which each round starts afresh once a player drops out. As stated in the introduction, we know that the dynamic model of the game has a unique mixed ESS given by a certain exponential distribution in each round. We assume that there is an increasing $\mathcal{C}^1$-function defined on the unit interval, denoted by $V(x)$, such that $V(0) = 0$ and $V_k := V(k/N)$.\\
\indent Since the mixed strategy ESS is an exponential distribution we may consider the evolution of the game as a continuous time Markov chain $\left( X(t) \right)_{t \geq 0}$, where $X(t)$ counts the total fraction of players that have decided to leave at $t$. Since a player that left the game never returns $X(t)$ will be a pure birth process. More specifically; if all the players left in the game after the first $k-1$ rounds play according to the ESS the time it takes to play the $k$'th round is given by the random variable $T_k := \text{min} (\tau^k_1, ..., \tau^k_{N-k+1})$, where $\tau^k_1, ..., \tau^k_{N-k+1}$ are i.i.d. exponentially distributed with mean $(N-k)(V_{k+1} - V_k)$. Therefore
\begin{equation}
    T_k \sim \text{exp}\left( \frac{N-k+1}{(N-k)\left( V_{k+1} - V_k \right)} \right)
    \label{T_k}
\end{equation}
and if we let $\lambda_k := (N-k+1)/((N-k)\left( V_{k+1} - V_k \right))$, for $k = 1, ..., N-1$, we have the finite birth process below describing the time evolution of the game as players are quitting:
\begin{equation}
    \mycirc{0} \xrightarrow{\lambda_1} \mycirc{1} \xrightarrow{\lambda_2} \mycirc{2} \xrightarrow{\lambda_3} ... \xrightarrow{\lambda_N} \mycirc{\textit{N}}
    \label{birthprocess}
\end{equation}
We define $\lambda_N := 0$ since the game ends as soon as $N-1$ players has left. Now, consider the stochastic jump process
\begin{equation}
    X(t) = \sum_{k=1}^{N-1}\frac{1}{N} \mathbb{I}_{ \{T_1 + ... + T_k \leq t \} }.
    \label{x}
\end{equation}
 Then $X(t)$ is a continuous time Markov process having the value $k/N$ during the $(k+1)$'th round of the game. We are interested in the limit $\lim_{N \rightarrow \infty} \mathbb{E} [X(t)]$ and to prove convergence towards $q(t)$ (see \ref{diff1}). In order to find a closed form expression of the expectation of $X(t)$ we use standard methods from continuous time Markov chain theory (see e.g. \cite{Anderson}). To the pure birth process in (\ref{x}) we have the associated intensity matrix
\begin{equation}
    \textbf{Q} :=
    \begin{pmatrix}
        -\lambda_1 & \lambda_1 & 0 & \cdots & \cdots & 0 \\
        0 & -\lambda_2 & \lambda_2 & 0 & \cdots & 0 \\
        \vdots & \vdots & \vdots & \ddots & \ddots & \vdots \\
        0 & 0 & 0 & \cdots & -\lambda_{N-1} & \lambda_{N-1} \\
        0 & 0 & 0 & 0 & \cdots & 0
    \end{pmatrix}.
\end{equation}
The Chapman-Kolomogorov equations state that
\begin{equation}
    \left\{
    \begin{array}{l}
        \frac{d\textbf{P}}{dt}(t) = \textbf{Q} \cdot \textbf{P}(t), \hspace{0.5cm} t \geq 0 \\
        \textbf{P}(0) = I,
    \end{array}
    \right.
    \label{chapmankolomogorov}
\end{equation}
where $\textbf{P}(t) = \left( p_{ij}(t) \right) \in \mathbb{R}^{N \times N}$ is the matrix of transition probabilities from state $i$ to state $j$ in (\ref{x}), at time $t$. Henceforth, for the sake of simplicity, we will assume that $\lambda_1 \neq \lambda_2 \neq ... \neq \lambda_N$. Allowing equalities would make the analysis much more involved, but it would not contribute to any more interesting results. Solving (\ref{chapmankolomogorov}) by hand is tedious but straight forward, and it is possible to find a closed form expression of the transition matrix $\textbf{P}(t)$ (see Appendix \ref{appendix1}). The state probabilities $p_i(t) := \mathbb{P} \{ X(t) = (i-1)/N \}$, collected in $\textbf{p}(t) = (p_1(t), ..., p_N(t))$, can be computed using the relation $\textbf{p}(t) = \textbf{p}(0) \cdot \textbf{P}(t)$ together with the initial condition $\textbf{p}(0) = (1, 0, ..., 0)$. The result is
\begin{equation}
    p_i(t) =
        \left\{
        \begin{array}{ll}
            e^{-\lambda_1 t}, & \text{if} \hspace{0.5cm} i=1 \\
            \sum_{l = 1}^{i} \frac{\prod_{k = 1}^{i-1} \lambda_k }{\prod_{k=1, k\neq l}^{i}\left(  \lambda_k - \lambda_l\right)} \cdot e^{-\lambda_l t}, & \text{if} \hspace{0.5cm} 1 < i \leq N.
        \end{array}
        \right.
\end{equation}
Therefore
\begin{equation}
    \mathbb{E}[ X(t)] = \sum_{i=0}^{N-1} \frac{i}{N} \cdot p_{i+1}(t)  = \sum_{i=1}^{N-1} \frac{i}{N} \sum_{l = 1}^{i+1} \frac{\prod_{k = 1}^{i} \lambda_k}{\prod_{k=1, k\neq l}^{i+1}\left(  \lambda_k - \lambda_l\right)} \cdot e^{-\lambda_l t}.
\end{equation}
In order to proceed and to understand this complicated expression the following lemma is useful:
\begin{lemma}
    Let $\left\{ \lambda_i \right\}_{i=1}^{n}$ be a sequence of positive and distinct real numbers. If $f_i(t) = \lambda_i e^{-\lambda_i t} \chi_{[0,\infty)}$, then
    \begin{equation*}
        f_1 * f_2 * ... * f_n (t) = \sum_{l=1}^{n} \frac{\prod_{k=1}^{n}\lambda_k}{\prod_{k=1, k \neq l}^{n} (\lambda_k - \lambda_l)} \cdot e^{-\lambda_l t}.
    \end{equation*}
    \label{lemma2}
\end{lemma}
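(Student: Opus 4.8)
The base case $n=1$ is immediate: the empty product in the denominator is $1$, so the right-hand side reads $\lambda_1 e^{-\lambda_1 t}$, which is exactly $f_1(t)$. For the inductive step, I would assume the formula holds for $n-1$ distinct positive reals and compute
\begin{equation*}
    f_1 * \cdots * f_n (t) = \bigl( f_1 * \cdots * f_{n-1} \bigr) * f_n (t)
        = \left( \sum_{l=1}^{n-1} \frac{\prod_{k=1}^{n-1}\lambda_k}{\prod_{k \neq l}^{n-1}(\lambda_k - \lambda_l)} \, e^{-\lambda_l t} \right) * \bigl( \lambda_n e^{-\lambda_n t} \bigr).
\end{equation*}
Since convolution is linear, this reduces to evaluating the single convolution of two exponentials $e^{-\lambda_l t}$ and $\lambda_n e^{-\lambda_n t}$ for each $l$, and summing. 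Using the elementary identity (for $a \neq b$, both supported on $[0,\infty)$)
\begin{equation*}
    \bigl( e^{-a t} \bigr) * \bigl( e^{-b t} \bigr)(t) = \int_0^t e^{-a s} e^{-b(t-s)} \, ds = \frac{e^{-bt} - e^{-at}}{a - b},
\end{equation*}
each term produces a linear combination of $e^{-\lambda_l t}$ (with $l < n$) and $e^{-\lambda_n t}$.

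The main work, and the step I expect to be the only real obstacle, is bookkeeping: after substituting the convolution identity I must collect the coefficient of each $e^{-\lambda_l t}$ and verify it equals the claimed $\prod_{k=1}^{n}\lambda_k \big/ \prod_{k\neq l}^{n}(\lambda_k - \lambda_l)$. For a fixed $l < n$ the coefficient of $e^{-\lambda_l t}$ picks up an extra factor $\lambda_n/(\lambda_n - \lambda_l)$ relative to the inductive hypothesis, which is precisely the factor needed to extend the product $\prod_{k\neq l}^{n-1}(\lambda_k - \lambda_l)$ in the denominator to $\prod_{k\neq l}^{n}(\lambda_k - \lambda_l)$ and the numerator product up to $\lambda_n$. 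The coefficient of the genuinely new exponential $e^{-\lambda_n t}$ is more delicate: it is a sum over all $l < n$ of the terms $-\lambda_n \prod_{k=1}^{n-1}\lambda_k \big/ \bigl[ (\lambda_n - \lambda_l)\prod_{k\neq l}^{n-1}(\lambda_k - \lambda_l) \bigr]$.

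To handle the $e^{-\lambda_n t}$ coefficient cleanly I would invoke a partial-fractions identity rather than manipulate the sum by brute force. Consider the rational function $1 / \prod_{k=1}^{n-1}(\lambda_n - \lambda_k)$ viewed through its partial-fraction decomposition, or equivalently use the classical identity
\begin{equation*}
    \sum_{l=1}^{n-1} \frac{1}{(\lambda_n - \lambda_l)\prod_{k\neq l}^{n-1}(\lambda_k - \lambda_l)} = \frac{1}{\prod_{k=1}^{n-1}(\lambda_n - \lambda_k)},
\end{equation*}
which is a standard consequence of Lagrange interpolation (it expresses that the sum of Lagrange basis residues reconstructs a simple pole). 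Multiplying by $-\lambda_n \prod_{k=1}^{n-1}\lambda_k$ turns the left-hand side into the collected coefficient of $e^{-\lambda_n t}$, and the right-hand side becomes $-\prod_{k=1}^{n}\lambda_k \big/ \prod_{k=1}^{n-1}(\lambda_n - \lambda_k) = \prod_{k=1}^{n}\lambda_k \big/ \prod_{k\neq n}^{n}(\lambda_k - \lambda_n)$, which is exactly the claimed $l=n$ term. With both families of coefficients matched, the induction closes. (An alternative, perhaps slicker, route would be to take Laplace transforms: $\widehat{f_i}(s) = \lambda_i/(s+\lambda_i)$, so the transform of the convolution is $\prod_{k=1}^{n}\lambda_k \big/ \prod_{k=1}^{n}(s+\lambda_k)$, and the stated formula is simply its partial-fraction expansion inverted term by term — but the inductive argument keeps the proof self-contained.)
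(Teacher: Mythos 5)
Your inductive argument is sound and reaches the correct formula, but it is a genuinely different route from the paper's: the paper proves the lemma exactly by the ``alternative, slicker route'' you mention in your closing parenthesis. It takes the Laplace transform of the convolution to get $\prod_{k=1}^n \lambda_k/(s+\lambda_k)$, expands in partial fractions, identifies each coefficient as the residue $\lim_{s\to-\lambda_l}(s+\lambda_l)\Lambda(s)=\prod_k\lambda_k/\prod_{k\neq l}(\lambda_k-\lambda_l)$, and inverts term by term. That proof is three lines and gets all $n$ coefficients at once; the price is that it imports the transform machinery (which the paper uses anyway in Theorem 3.3, so nothing is wasted). Your induction is more elementary and self-contained, at the cost of the bookkeeping you describe: the $l<n$ coefficients come out cleanly (the extra factor $\lambda_n/(\lambda_n-\lambda_l)$ is exactly right), and the $e^{-\lambda_n t}$ coefficient needs the Lagrange-type identity.

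One thing to fix before this is airtight: the partial-fractions identity as you have displayed it is off by a sign of $(-1)^{n-2}$. The standard identity reads
\begin{equation*}
    \sum_{l=1}^{n-1} \frac{1}{(\lambda_n - \lambda_l)\prod_{k\neq l}^{n-1}(\lambda_l - \lambda_k)} = \frac{1}{\prod_{k=1}^{n-1}(\lambda_n - \lambda_k)},
\end{equation*}
with $\prod_{k\neq l}(\lambda_l-\lambda_k)$ in the denominator, whereas your version uses $\prod_{k\neq l}(\lambda_k-\lambda_l)=(-1)^{n-2}\prod_{k\neq l}(\lambda_l-\lambda_k)$; already for $n=3$ your left-hand side evaluates to $-1/\bigl((\lambda_3-\lambda_1)(\lambda_3-\lambda_2)\bigr)$, not $+$. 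You then commit a compensating slip in the next line, writing $-1/\prod_{k=1}^{n-1}(\lambda_n-\lambda_k)=1/\prod_{k=1}^{n-1}(\lambda_k-\lambda_n)$, which is likewise true only up to the factor $(-1)^{n}$. The two errors cancel, so your final coefficient of $e^{-\lambda_n t}$ is correct, but as written the intermediate displays are false for odd $n$. Carrying the sign $(-1)^{n-2}$ explicitly through both steps (or stating the identity with $\prod_{k\neq l}(\lambda_l-\lambda_k)$) repairs this, and then the induction closes as you describe.
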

\begin{proof}
    Consider the Laplace transform of the convolution:
    \begin{equation}
        \mathscr{L} \left[ f_1 * f_2 * ... * f_n (t) \right] (s) = \hat{f}_1(s) \cdot ... \cdot \hat{f}_n(s) =  \prod_{k=1}^{n}\frac{\lambda_k}{s+\lambda_k},
    \end{equation}
    where $\hat{f}_i(s)$ denotes the Laplace transform of $f_i(t)$. Splitting the above product into partial fractions yields
    \begin{equation}
        \Lambda(s) := \left( \prod_{k=1}^{n} \lambda_k \right) \left( \prod_{k=1}^{n} \frac{1}{s+\lambda_k} \right) = \left( \prod_{k=1}^{n} \lambda_k \right) \left( \sum_{k=1}^{n} \frac{A_k}{s+\lambda_k} \right)
    \end{equation}
    and since all the $\lambda_k$ are distinct by assumption we get that
    \begin{equation}
        \lim_{s \rightarrow -\lambda_l} (s+\lambda_l)\Lambda(s) = \frac{\prod_{k=1}^{n}\lambda_k}{\prod_{k=1, k \neq l}^{n} (\lambda_k - \lambda_l)}.
    \end{equation}
    Thus, using the inverse transform to get back in to the time domain, we are done.
\end{proof}
\noindent We are now ready to state and prove a theorem concerning the convergence properties of $\mathbb{E}[X(t)]$ as $N \longrightarrow \infty$.
\begin{theorem}
    Let $\varepsilon \in (0,1)$ and let $V \in \mathcal{C}^1[0,1]$ be the prize function defining the Markov process $X(t)$ in (\ref{x}). Then, if we define
    \begin{equation*}
        \mathbb{E}_{1-\varepsilon} [ V(X(t)) ] := \sum_{i=1}^{\lfloor(1-\varepsilon)N\rfloor} V\left(\frac{i}{N}\right)\sum_{l = 1}^{i+1} \frac{\prod_{k = 1}^{i} \lambda_k}{\prod_{k=1, k\neq l}^{i+1}\left(  \lambda_k - \lambda_l\right)} \cdot e^{-\lambda_l t}
    \end{equation*}
    and
    \begin{equation*}
        \mathbb{E}_{\varepsilon} [ V(X(t)) ] := \sum_{i=\lceil (1-\varepsilon)N \rceil}^{N-1} V\left(\frac{i}{N}\right)\sum_{l = 1}^{i+1} \frac{\prod_{k = 1}^{i} \lambda_k}{\prod_{k=1, k\neq l}^{i+1}\left(  \lambda_k - \lambda_l\right)} \cdot e^{-\lambda_l t},
    \end{equation*}
    we have the following:
    \begin{enumerate}
        \item $\lim_{\varepsilon \rightarrow 0} \lim_{N \rightarrow \infty} \mathbb{E}_{1-\varepsilon} [ V(X(t)) ] = t - H(t-V(1)) \cdot t$
        \item $\lim_{\varepsilon \rightarrow 0} \lim_{N \rightarrow \infty} \left|\left| \mathbb{E}_{\varepsilon} [ V(X(t)) ] \right|\right|_{L^1(dt)} = 0$, \hspace{0.5cm} $t \in [0,V(1))$
        \item $\lim_{N \rightarrow \infty} p_{N}(t) = H(t - V(1))$
    \end{enumerate}
    with the limits taken in the stated order. Here $H(x)$ is the Heaviside function.
    \label{theorem1}
\end{theorem}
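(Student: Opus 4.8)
My plan rests on reading the inner sum probabilistically. Put $S_i := T_1 + \cdots + T_i$ with the $T_k$ independent and exponentially distributed with rate $\lambda_k$ (and $S_0 := 0$); by Lemma \ref{lemma2} the density of $S_i$ is $f_1 * \cdots * f_i$, and a short computation identifies the inner sum $\sum_{l=1}^{i+1}\frac{\prod_{k=1}^i \lambda_k}{\prod_{k\neq l}^{i+1}(\lambda_k-\lambda_l)}e^{-\lambda_l t}$ with the state probability $p_{i+1}(t) = \mathbb{P}\{S_i \le t < S_{i+1}\} = F_{S_i}(t) - F_{S_{i+1}}(t)$, where $F_{S_i}$ is the distribution function of $S_i$. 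Thus $\mathbb{E}_{1-\varepsilon}[V(X(t))] + \mathbb{E}_{\varepsilon}[V(X(t))] = \sum_{i=1}^{N-1}V(i/N)p_{i+1}(t) = \mathbb{E}[V(X(t))]$, and everything reduces to controlling the $F_{S_i}$. The engine is a concentration estimate: since $\mathbb{E}[S_i] = \sum_{k=1}^i 1/\lambda_k$ and $\mathrm{Var}(S_i) = \sum_{k=1}^i 1/\lambda_k^2$, I would check, exactly as in Lemma \ref{lemma1} (telescoping, together with $1 - \frac{N-k}{N-k+1} = \frac{1}{N-k+1}$), that $\mathbb{E}[S_i] = V(i/N) + o(1)$ uniformly for $i \le (1-\varepsilon)N$, and that $\mathrm{Var}(S_i) \le \mathrm{Var}(S_{N-1}) = \sum_{k=1}^{N-1}1/\lambda_k^2 \to 0$ (each summand is $O(1/N^2)$ because $V \in \mathcal{C}^1$, so the sum is $O(1/N)$). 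Chebyshev's inequality then yields $F_{S_i}(t) \to H(t - V(i/N))$. Applied to $i = N-1$ this is precisely item (3): $p_N(t) = F_{S_{N-1}}(t) \to H(t - V(1))$ for $t \ne V(1)$, using $\mathbb{E}[S_{N-1}] = T_N \to V(1)$ from Lemma \ref{lemma1}.

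For item (1), set $M := \lfloor(1-\varepsilon)N\rfloor$ and use $p_{i+1} = F_{S_i} - F_{S_{i+1}}$ with summation by parts and $V(0)=0$ to get $\mathbb{E}_{1-\varepsilon}[V(X(t))] = \sum_{i=1}^{M}\big(V(i/N) - V((i-1)/N)\big)F_{S_i}(t) - V(M/N)F_{S_{M+1}}(t)$. Replacing $F_{S_i}(t)$ by its limit $H(t - V(i/N))$, the first sum is a Riemann--Stieltjes sum converging to $\int_0^{1-\varepsilon} H(t - V(x))\,dV(x)$, which after $u = V(x)$ becomes $\int_0^{V(1-\varepsilon)} H(t-u)\,du = \min(t, V(1-\varepsilon))$; the boundary term tends to $V(1-\varepsilon)H(t - V(1-\varepsilon))$ by concentration of $S_{M+1}$. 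Hence $\lim_{N}\mathbb{E}_{1-\varepsilon}[V(X(t))] = \min(t, V(1-\varepsilon)) - V(1-\varepsilon)H(t-V(1-\varepsilon)) = t\big(1 - H(t - V(1-\varepsilon))\big)$, and sending $\varepsilon \to 0$ gives $t - H(t-V(1))\,t$.

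For item (2), observe $\mathbb{E}_{\varepsilon}[V(X(t))] \ge 0$, so its $L^1$-norm is its integral. Using $\int_0^\infty p_{i+1}(t)\,dt = \mathbb{E}[T_{i+1}] = 1/\lambda_{i+1} \le V_{i+2} - V_{i+1}$ for $i \le N-2$, and treating the absorbing term $i = N-1$ separately via item (3) and dominated convergence (so $\int_0^{V(1)} p_N \to 0$), I would bound $\int_0^{V(1)}\mathbb{E}_{\varepsilon}[V(X(t))]\,dt \le V(1)\sum_{i=\lceil(1-\varepsilon)N\rceil}^{N-2}(V_{i+2} - V_{i+1}) + o(1)$. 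The sum telescopes to $V_N - V_{\lceil(1-\varepsilon)N\rceil+1} \to V(1) - V(1-\varepsilon)$, which tends to $0$ as $\varepsilon \to 0$; nonnegativity then squeezes the iterated limit to $0$.

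The main obstacle is the Riemann--Stieltjes step in item (1): one must show that replacing the random $F_{S_i}(t)$ by the step function $H(t - V(i/N))$ costs nothing in the limit, even though the two differ by $O(1)$ across the transition window $\{i : |V(i/N) - t| \lesssim \mathrm{Var}(S_i)^{1/2}\}$. The resolution is that this window has $V$-measure $O(\mathrm{Var}(S_{N-1})^{1/2}) = O(N^{-1/2}) \to 0$, so its contribution to the sum vanishes; making this quantitative uniformly in $i \le (1-\varepsilon)N$ via Chebyshev is the one genuinely delicate estimate, everything else being telescoping and elementary limits.
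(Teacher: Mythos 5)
Your proposal is correct, but it proves the theorem by a genuinely different route than the paper. The paper stays in the Laplace domain throughout: it uses Lemma \ref{lemma2} to write the inner sum as $\tfrac{1}{\lambda_{i+1}}f_1*\cdots*f_{i+1}$, expands $\prod_{k=1}^{i+1}\lambda_k/(s+\lambda_k)$ via the complex logarithm to show it converges to $e^{-s(V((i+1)/N)-V(1/N))}$ up to controllable errors, passes to a Riemann integral in $i/N$, and inverts the resulting transform $\tfrac{1}{s^2}-(\tfrac{V(1-\varepsilon)}{s}+\tfrac{1}{s^2})e^{-sV(1-\varepsilon)}$; item (3) is likewise obtained by showing $\mathscr{L}[p_N'](s)\to e^{-sV(1)}$, i.e. $p_N'\to\delta_{V(1)}$. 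You instead work directly in the time domain: the identification $p_{i+1}(t)=F_{S_i}(t)-F_{S_{i+1}}(t)$, the moment computations $\mathbb{E}[S_i]=V(i/N)+O(\log N/N)$ and $\mathrm{Var}(S_i)=O(1/N)$ (the same telescoping as Lemma \ref{lemma1}), Chebyshev, and Abel summation. Your route is more elementary and makes the shape of the answer transparent --- the boundary term $V(M/N)F_{S_{M+1}}(t)$ and the Stieltjes sum visibly produce $t\wedge V(1-\varepsilon)$ and the Heaviside cutoff --- and your proof of (3) is considerably shorter than the paper's; what the transform method buys is a single uniform mechanism (the product asymptotics) that handles all $i$ at once and yields the limit in closed form without any case analysis near the jump. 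The one place you should be slightly more careful is the transition-window estimate in item (1): with the window taken exactly at scale $\mathrm{Var}(S_i)^{1/2}$, Chebyshev only gives an $O(1)$ bound just outside it, so you should widen the window to an intermediate scale (e.g. $|V(i/N)-t|\le N^{-1/4}$), paying $O(N^{-1/4})$ in total $V$-increment inside and $O(N^{-1/2})$ from Chebyshev outside; your bound of the window's contribution by its total $V$-increment (rather than by counting indices, which would require $V'$ bounded below) is exactly the right device. Your treatment of (2) by telescoping $\sum(V_{i+2}-V_{i+1})=V_N-V_{\lceil(1-\varepsilon)N\rceil+1}$ is equivalent to, and arguably cleaner than, the paper's Riemann-sum bound $\varepsilon\sup|VV'|$.
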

\begin{proof}
    We start by proving $(1)$. Using the Laplace transform and the conclusion from Lemma \ref{lemma2} we get that
    \begin{align*}
        \mathscr{L} \left[ \mathbb{E}_{1-\varepsilon} [ V(X(t)) ] \right] (s) &= \sum_{i=1}^{\lfloor(1-\varepsilon)N\rfloor} V \left( \frac{i}{N} \right) \cdot \mathscr{L} \left[ \frac{1}{\lambda_{i+1}} f_1 * ... * f_{i+1} (t) \right] (s) = \\
         &= \sum_{i=1}^{\lfloor(1-\varepsilon)N\rfloor} V \left( \frac{i}{N} \right)\frac{1}{\lambda_{i+1}} \cdot \hat{f}_1 \cdot ... \cdot \hat{f}_{i+1} (s) =\\
         &= \sum_{i=1}^{\lfloor(1-\varepsilon)N\rfloor} V \left( \frac{i}{N} \right)\frac{1}{\lambda_{i+1}} \cdot \prod_{k=1}^{i+1} \frac{\lambda_k}{s+\lambda_k},
    \end{align*}
    which is well defined for all $s\in \mathbb{C}$ such that $\mbox{Re}\{s\} > - \min_{k} \lambda_k$. Now, investigate the product in the above expression.
    \begin{align*}
        &\prod_{k=1}^{i+1} \frac{\lambda_k}{s+\lambda_k} = \text{exp} \left( \sum_{k=1}^{i+1} \text{ln} \frac{\lambda_k}{s+\lambda_k} \right) =\\
        &\text{exp} \left( \sum_{k=1}^{i+1} \text{ln}\frac{N-k+1}{(N-k)(V_{k+1} - V_k)} - \text{ln} \left( s + \frac{N-k+1}{(N-k)(V_{k+1} - V_k)} \right) \right) =\\
        &\text{exp} \left( \sum_{k=1}^{i+1} \text{ln}\frac{1-\frac{k}{N}+\frac{1}{N}}{\left(1-\frac{k}{N}\right)(V_{k+1} - V_k)} - \text{ln} \frac{ s\left( 1-\frac{k}{N} \right) \left( V_{k+1} - V_k \right) + 1 - \frac{k}{N} + \frac{1}{N} }{\left(1-\frac{k}{N}\right)(V_{k+1} - V_k)}\right) =\\
        &\text{exp} \left( \sum_{k=1}^{i+1} \text{ln} \left( 1-\frac{k}{N} + \frac{1}{N} \right) - \text{ln} \left( s\left( 1-\frac{k}{N} \right) \left( V_{k+1} - V_k \right) + \left( 1 - \frac{k}{N} + \frac{1}{N} \right) \right)  \right),
    \end{align*}
    where we take $\mbox{ln}(.)$ to be the principal branch of the complex logarithm. Using that $\text{ln}(x+1/N) = \text{ln}(x) + 1/(Nx) + \mathcal{O}(1/N^2)$ for all $x > 0$ and that $V_{k+1} - V_k = V'(\xi)/N$ for some $\xi \in (k/N, (k+1)/N)$ we get that
    \begin{eqnarray}
        \nonumber &&\text{ln} \left( 1 - \frac{k}{N} + \frac{1}{N} \right) - \text{ln} \left( s \left( 1 - \frac{k}{N} \right) (V_{k+1} - V_k) + \left( 1 - \frac{k}{N} \right) + \frac{1}{N} \right) =\\
        \nonumber &&-\text{ln}\left( s(V_{k+1} - V_k) + 1 \right) + \frac{s V'(\xi)}{N^2\left( 1-\frac{k}{N} \right) \left( s(V_{k+1}-V_k) + 1 \right)} + \mathcal{O} \left( \frac{1}{N^2} \right) =\\
        &&-\text{ln}\left( s(V_{k+1} - V_k) + 1 \right) + \mathcal{O}\left( \frac{1}{N^2} \right),
        \label{complexlog}
    \end{eqnarray}
    where we have used that $k \leq \lfloor(1-\varepsilon)N\rfloor + 1 < N$ for $N$ large enough, in the final equality. Expression (\ref{complexlog}) is local and well defined for all $s \in B_{r_N}(0)$, i.e. the disc centered at the origin with radius $r_N := \min_{k}(V_{k+1} - V_k)^{-1} \approx \min_{k} \lambda_k$. By the equality $V_{k+1} - V_k = (V'(k/N) + \mathcal{O}(1/N))/N$ and a Taylor expansion of $\text{ln} (x+1)$ about $x=0$ we get for all $s \in B_{r_N}(0)$ that
    \begin{align*}
        &\text{exp} \left( - \sum_{k=1}^{i+1} \left( \text{ln} (s(V_{k+1} - V_k) + 1) + \mathcal{O}\left( \frac{1}{N^2} \right) \right) \right)=\\
        &\text{exp} \left( - \sum_{k=1}^{i+1} \left( \frac{s}{N} \left( V'\left( \frac{k}{N} \right) + \mathcal{O}\left( \frac{1}{N} \right) \right) + s^2 \mathcal{O}\left( \frac{1}{N^2} \right) + \mathcal{O}\left( \frac{1}{N^2} \right) \right) \right) =\\
        &\text{exp} \left( -\int_{1/N}^{(i+1)/N} s \cdot V'(x) dx + \frac{i}{N} \left[ \mathcal{O}\left( \frac{1}{N} \right) + s \mathcal{O}\left( \frac{1}{N} \right) + s^2 \mathcal{O}\left( \frac{1}{N} \right) \right] \right) =\\
        &\text{exp} \left(-s\left( V\left( \frac{i+1}{N} \right) - V\left( \frac{1}{N} \right) \right) + \frac{i}{N} \cdot \Phi\left(s,\frac{1}{N}  \right) \right),
    \end{align*}
    where the ordo terms have been included in $\Phi$. Thus,
    \begin{align*}
        &\mathscr{L} \left[ \mathbb{E}_{1-\varepsilon} [ V(X(t)) ] \right] (s) = \sum_{i=1}^{\lfloor(1-\varepsilon)N\rfloor} V \left( \frac{i}{N} \right) \frac{1}{\lambda_{i+1}} \cdot e^{-s\left( V\left( \frac{i+1}{N} \right) - V\left( \frac{1}{N} \right) \right) + \frac{i}{N} \cdot \Phi\left(s,\frac{1}{N}  \right)} =\\
        &\sum_{i=1}^{\lfloor(1-\varepsilon)N\rfloor} V \left( \frac{i}{N} \right) \frac{\left(  V'\left( \frac{i}{N} \right) + \mathcal{O} \left( \frac{1}{N} \right) \right) \left( 1 + \mathcal{O}\left( \frac{1}{N} \right) \right)}{N} \cdot e^{-s\left( V\left( \frac{i+1}{N} \right) - V\left( \frac{1}{N} \right) \right) + \frac{i}{N} \cdot \Phi\left(s,\frac{1}{N}  \right)} =\\
        &\int_{1/N}^{1-\varepsilon} V(x)\left( V'(x) + \mathcal{O}\left( \frac{1}{N} \right)\right) \left( 1 + \mathcal{O}\left( \frac{1}{N} \right) \right) e^{-s\left( V(x) - V(1/N) \right) + \frac{i}{N} \cdot \Phi\left(s,\frac{1}{N}  \right)} dx,
    \end{align*}
    and by considering the $N$-player limit of this expression we finally get (making the change of variables $du = V'(x)dx$ using that $V(x)$ is increasing) that
    \begin{align*}
        \lim_{N \rightarrow \infty}\mathscr{L} \left[ \mathbb{E}_{1-\varepsilon} [ V(X(t)) ] \right] (s) &= \int_0^{1-\varepsilon} V(x)V'(x)e^{-sV(x)}dx = \int_0^{V(1-\varepsilon)} ue^{-su} du =\\
        &= \frac{1}{s^2} - \left( \frac{V(1-\varepsilon)}{s}  + \frac{1}{s^2} \right) e^{-sV(1-\varepsilon)},
    \end{align*}
    which in turn yields
    \begin{equation*}
        \lim_{\varepsilon \rightarrow 0} \lim_{N \rightarrow \infty} \mathscr{L} \left[ \mathbb{E}_{1-\varepsilon} [ V(X(t)) ] \right] (s) = \frac{1}{s^2} - \left( \frac{V(1-\varepsilon )}{s} + \frac{1}{s^2} \right) e^{-sV}.
    \end{equation*}
    Since $V(x)$ is bounded on a compact interval we can use Lebesgue's theorem on dominated convergence to interchange the order between the limits and the Laplace transform and then, by the inversion formula, we get that
    \begin{equation*}
        \lim_{\varepsilon \rightarrow 0} \lim_{N \rightarrow \infty} \mathbb{E}_{1-\varepsilon} [ V(X(t)) ] = t - H(t-V(1)) \cdot t
    \end{equation*}
    For proving $(2)$ it will by assuming $(3)$ suffice to consider the $L^1$-norm of the sum from $\lceil (1-\varepsilon)N \rceil$ to $N-2$.
    \begin{eqnarray*}
        &&\left|\left| \sum_{i = \lceil (1-\varepsilon)N \rceil }^{N-2}V\left( \frac{i}{N} \right) p_{i+1} \right|\right| = \left|\left| \sum_{i = \lceil (1-\varepsilon)N \rceil }^{N-2}V\left( \frac{i}{N} \right) \cdot \frac{1}{\lambda_{i+1}} \left( \lambda_1 e^{-\lambda_1 t} * ... * \lambda_{i+1} e^{-\lambda_{i+1} t} \right) \right|\right| \leq\\
        &&\leq \sum_{i = \lceil (1-\varepsilon)N \rceil }^{N-2}V\left( \frac{i}{N} \right) \cdot \frac{1}{\lambda_{i+1}} \leq \sum_{i = \lceil (1-\varepsilon)N \rceil }^{N-2}V\left( \frac{i}{N} \right) \cdot \frac{V_{i+2} - V_{i+1}}{1/N}\cdot \frac{1}{N} =\\
        &&= \sum_{i = \lceil (1-\varepsilon)N \rceil }^{N-2} V\left( \frac{i}{N} \right) \left( V'\left( \frac{i}{N} \right) + \mathcal{O}\left( \frac{1}{N} \right) \right)\cdot \frac{1}{N} =\\
        &&= \int_{1-\varepsilon}^1 V(x)\left( V'(x) + \mathcal{O}\left( \frac{1}{N} \right) \right)dx + \mathcal{O}\left( \frac{1}{N} \right) \leq\\
        &&\leq \varepsilon \cdot \text{sup} \left| V(x)V'(x) \right| + \mathcal{O}\left( \frac{1}{N} \right),
    \end{eqnarray*}
    where we used the triangle inequality and the fact that the convolution is a probability density in the second inequality. The statement in $(2)$ follows immediately. Finally, for proving the statement in $(3)$ we analyse the limiting behavior of $p'_N(t)$. Note that
    \begin{equation*}
        p_N(t) = \sum_{l=1}^{N-1} \frac{\prod_{k=1}^{N-1}\lambda_k}{\prod_{k=1, k \neq l}^{N} (\lambda_k - \lambda_l)} \cdot e^{-\lambda_l t} + 1.
    \end{equation*}
    By Lemma \ref{lemma2} we get
    \begin{eqnarray*}
        \frac{d}{dt} p_N(t) = \sum_{l=1}^{N-1} \frac{\prod_{k=1}^{N-1}\lambda_k}{\prod_{k=1, k \neq l}^{N-1} (\lambda_k - \lambda_l)} \cdot e^{-\lambda_l t} = \left(\lambda_1 e^{-\lambda_1 t} * ... * \lambda_{N-1} e^{-\lambda_{N-1} t} \right)(t).
    \end{eqnarray*}
    From the proof of $(1)$ we know that
    \begin{equation*}
        \mathscr{L} \left[ \frac{d}{dt}p_N(t) \right] = \prod_{k=1}^{N-1} \frac{\lambda_k}{\lambda_k + s} = e^{-s \left( V\left( \frac{N-1}{N} \right) - V\left( \frac{1}{N} \right) \right) + \frac{N-1}{N}\cdot \Phi \left( s, \frac{1}{N} \right)},
    \end{equation*}
    and it follows that $\lim_{N\rightarrow \infty} p'_N(t) = \delta_{V(1)}(t)$. This proves $(3)$ since $p_N(0) = 0$ and $\lim_{t \rightarrow \infty} p_N(t) = 1$.
\end{proof}
\noindent Following the same line of reasoning as in the proof of Theorem \ref{theorem1} one can also prove that $\lim_{N \rightarrow \infty}\text{Var}(V(X(t))) := \lim_{N \rightarrow \infty}(\mathbb{E}[V(X(t))^2] - \mathbb{E}[V(X(t))]^2) = 0$ on the interval $t \in [0,V(1))$. We collect our results in a corollary.
\begin{corollary}
    Let $V(x)$ be an increasing $\mathcal{C}^1$-function defined on the unit interval and let $X(t)$ be defined as in (\ref{x}). Then
    \begin{enumerate}[(i)]
        \item $\lim_{N \rightarrow \infty} \mathbb{E}[X(t)] = V^{-1}(t) := q(t), \hspace{0.5cm} t \in [0,V(1))$
        \item $\lim_{N \rightarrow \infty}$ $\mathrm{Var}(X(t)) = 0, \hspace{0.5cm} t \in [0,V(1)) $
    \end{enumerate}
\end{corollary}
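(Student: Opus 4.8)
The plan is to deduce both statements from Theorem \ref{theorem1} together with the variance estimate announced just before the corollary; the essential point is to convert the information we have about $\mathbb{E}[V(X(t))]$ into information about $\mathbb{E}[X(t)]$ itself. First I would combine parts $(1)$ and $(2)$ of Theorem \ref{theorem1}. Splitting $\mathbb{E}[V(X(t))] = \mathbb{E}_{1-\varepsilon}[V(X(t))] + \mathbb{E}_{\varepsilon}[V(X(t))]$ and letting $N \to \infty$ and then $\varepsilon \to 0$, part $(1)$ contributes $t - H(t-V(1))\,t = t$ on the interval $t \in [0,V(1))$ (where $H(t-V(1)) = 0$), while part $(2)$ shows that the tail term $\mathbb{E}_{\varepsilon}$ is negligible in $L^1(dt)$. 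Hence $\lim_{N\to\infty}\mathbb{E}[V(X(t))] = t$ for $t \in [0,V(1))$.

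Next I would upgrade this mean convergence to $L^2$ convergence using the variance. Recall that $\lim_{N\to\infty}\mathrm{Var}(V(X(t))) = 0$ on $[0,V(1))$, as noted immediately after Theorem \ref{theorem1}. Combined with $\mathbb{E}[V(X(t))] \to t$, this gives $\mathbb{E}[V(X(t))^2] = \mathrm{Var}(V(X(t))) + \mathbb{E}[V(X(t))]^2 \to t^2$, so that $V(X(t)) \to t$ in $L^2(d\mathbb{P})$, and in particular in probability, as $N \to \infty$.

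Finally I would invert $V$. Since $V$ is strictly increasing and $\mathcal{C}^1$, it is a continuous bijection of $[0,1]$ onto $[0,V(1)]$ with continuous inverse $V^{-1}$, and because $t \in [0,V(1))$ lies in the interior of the range, $V^{-1}$ is continuous at $t$. The continuous mapping theorem then yields $X(t) = V^{-1}(V(X(t))) \to V^{-1}(t) = q(t)$ in probability. As $0 \le X(t) \le 1$ the family is uniformly bounded, so convergence in probability to the constant $q(t)$ upgrades to $L^2$ convergence: $\mathbb{E}[X(t)] \to q(t)$, which is $(i)$, and $\mathbb{E}[X(t)^2] \to q(t)^2$, so that $\mathrm{Var}(X(t)) = \mathbb{E}[X(t)^2] - \mathbb{E}[X(t)]^2 \to 0$, which is $(ii)$.

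The main obstacle is the nonlinearity of $V$: convergence of $\mathbb{E}[V(X(t))]$ does not by itself control $\mathbb{E}[X(t)]$, owing to the Jensen gap $\mathbb{E}[V(X)] \ne V(\mathbb{E}[X])$. The vanishing variance is exactly what closes this gap, since it forces $V(X(t))$ to concentrate at the deterministic value $t$, after which the continuous inverse transports the concentration back to $X(t)$. A secondary technical point to treat with care is the mode and order of the limits produced by Theorem \ref{theorem1} (pointwise in $t$ versus convergence in $L^1(dt)$); because all the expectations involved are uniformly bounded, this can be handled by working with the $L^1(dt)$ statement and, if a genuinely pointwise conclusion in $t$ is desired, by passing to an a.e.-convergent subsequence or by exploiting the monotonicity of $t \mapsto \mathbb{E}[X(t)]$.
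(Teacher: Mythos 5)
Your proposal is correct and follows essentially the same route the paper intends: it combines parts $(1)$ and $(2)$ of Theorem \ref{theorem1} with the remark that $\mathrm{Var}(V(X(t)))\to 0$, and then transports the concentration of $V(X(t))$ at $t$ back to $X(t)$ via the continuous inverse. The paper leaves that last inversion step implicit, and your continuous-mapping-plus-uniform-boundedness argument is exactly the right way to make it rigorous, including the honest caveat about reconciling the pointwise and $L^1(dt)$ modes of convergence in Theorem \ref{theorem1}.
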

\noindent Below we have included some numerical results illustrating the convergence of $\mathbb{E}[X(t)]$ and $\mbox{Var}(X(t))$ for two different choices of $V(x)$.
\begin{figure}[H]
    \begin{center}
        \includegraphics[scale=0.34]{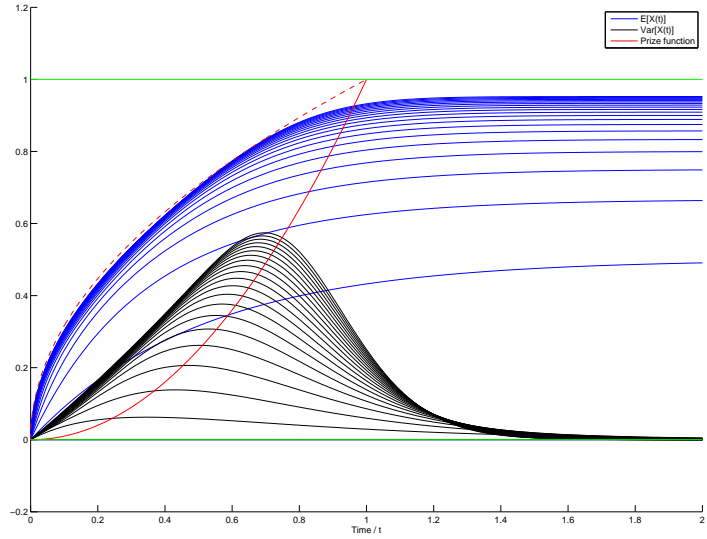}
        \caption{Convergence of $\mathbb{E}[X(t)]$ and $\mbox{Var}(X(t))$ when $V(x) = x^2$.}
    \end{center}
\end{figure}
\begin{figure}[H]
    \begin{center}
        \includegraphics[scale=0.34]{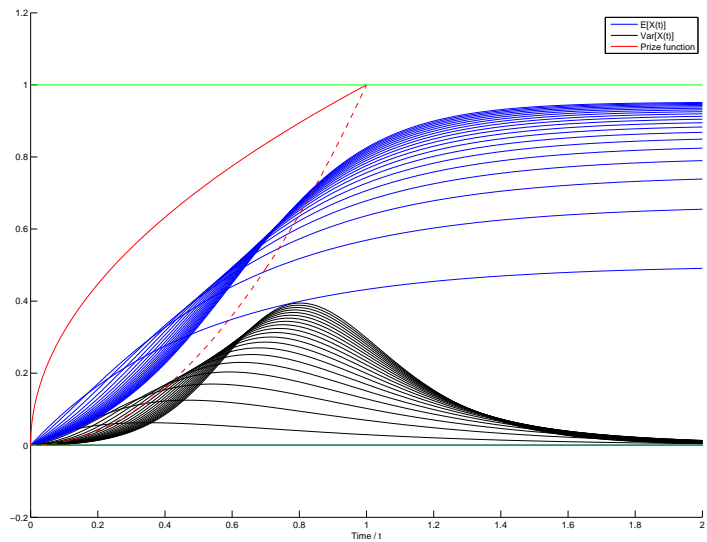}
        \caption{Convergence of $\mathbb{E}[X(t)]$ and $\mbox{Var}(X(t))$ when $V(x) = \sqrt{x}$.}
    \end{center}
\end{figure}
\noindent The $N$-player limit in the dynamic generalization of the WA introduces some new features in the game. In contrast to the case of having a finite number of players, in the limit, there will be a continuous flow of players quitting the game. This flow depends on the behavior of the mixed strategies that the players use to pick waiting times after a certain fraction of the players have left at time $t$. It is natural to believe that the only characteristic of the strategies that determines the flow of players is the behavior near $\tau = 0$. Since there are infinitely many players (all using the same strategy) there will always be players having waiting times arbitrarily close to $\tau = 0$. In what follows we will analyse the connection between the out-flow of players and the characteristics of the mixed strategies.\\
\indent Suppose that $\{ f_i^{\alpha} \}_{i=1}^N$ and $\{ f_i^{\beta} \}_{i=1}^N$ are two different sequences of probability densities in $\mathcal{C}^1(\mathbb{R}_+)$ such that $f_i^{\alpha}(0) = f_i^{\beta}(0) \neq 0$ for all $i = 1, ..., N$. We think of $f_i^{\alpha}$ or $f_i^{\beta}$ as different choices of strategies used in the $i$'th round of an $N$-player dynamic model WA. Let $\Omega$ represent both $\alpha$ and $\beta$ and consider an independent sample $\tau_1^{\Omega}, ..., \tau_{N-i+1}^{\Omega} \sim f_i^{\Omega}d\tau$. We define
\begin{equation}
    T_i^{\Omega} := \text{min}\left( \tau_1^{\Omega}, ..., \tau_{N-i+1}^{\Omega} \right).
\end{equation}
Note that
\begin{align*}
    \mathbb{P}\left\{ T_i^{\Omega} \leq x \right\} &= 1 - \mathbb{P}\left\{ \tau_1^{\Omega} \geq x, ..., \tau_{N-i+1}^{\Omega} \geq x \right\} =\\
     &= 1 - \prod_{k=1}^{N-i+1}\mathbb{P}\left\{ \tau_k^{\Omega} \geq x \right\} = 1 - \left( 1 - F_i^{\Omega}(x) \right)^{N-i+1},
\end{align*}
where $F_i^{\Omega}$ is the cdf associated to the density function $f_i^{\Omega}$, i.e. $F_i^{\Omega} (x) := \mathbb{P}\left\{ \tau_k^{\Omega} \leq x \right\}$. Thus, the density function of $T_i^{\Omega}$ is given by
\begin{equation}
    f_{(i)}^{\Omega}(\tau) = (N-i+1)f_i^{\Omega}(\tau)\left( 1 - F_i^{\Omega}(\tau) \right)^{N-i}.
\end{equation}
Note that
\begin{eqnarray}
    \lim_{N\rightarrow \infty} \int_0^{\infty} f_{(i)}^{\Omega}(\tau) \cdot \varphi(\tau) d\tau = \varphi(0) + \lim_{N\rightarrow \infty}\int_0^{\infty} f_{(i)}^{\Omega}(\tau) (\varphi(\tau) - \varphi(0)) d\tau = \varphi(0)
    \label{dist}
\end{eqnarray}
for any test function $\varphi \in \mathcal{C}^{\infty}_0 \left( \mathbb{R} \right)$, so $\lim_{N\rightarrow \infty}f_{(i)}^{\Omega} d\tau = \delta_0 $. Finally we also define the sum
\begin{equation}
    S_{q}^{\Omega} := \sum_{i=1}^{\lfloor qN \rfloor} T_i^{\Omega}, \hspace{0.5cm} q \in (0,1].
\end{equation}
Now, let $\delta > 0$ and consider the probability
\begin{eqnarray}
    \nonumber &&\mathbb{P} \left\{ \left| S_{ q }^{\alpha} - S_{ q }^{\beta} \right| \geq \delta \right\} \leq \frac{1}{\delta ^2} \mathbb{E} \left[ \left( S_{ q }^{\alpha} - S_{ q }^{\beta} \right)^2 \right] =\\
    &&= \frac{1}{\delta ^2} \left( \mathbb{E}\left[ \left( S_{ q }^{\alpha} \right)^2 \right] + \mathbb{E}\left[ \left( S_{ q}^{\beta} \right)^2 \right] - 2\mathbb{E}\left[ S_{ q }^{\alpha} \right]\mathbb{E}\left[ S_{ q }^{\beta} \right]\right) =\\
    \nonumber &&= \frac{1}{\delta ^2}\left( \text{Var}\left( S_{ q }^{\alpha} \right) + \text{Var}\left( S_{ q }^{\beta} \right) + \mathbb{E}\left[ S_{ q }^{\alpha} \right]^2 + \mathbb{E}\left[ S_{ q }^{\beta} \right]^2 - 2\mathbb{E}\left[ S_{ q }^{\alpha} \right]\mathbb{E}\left[ S_{ q }^{\beta} \right] \right),
    \label{chebyshev}
\end{eqnarray}
where we used the Chebyshev-Markov inequality in the first step. Our goal is to prove that the right most expression in the inequality above, under reasonable assumptions on the functions $\left\{ f_i^{\Omega} \right\}_{i=1}^N$, tend point wise to zero as $N$ tends to infinity. If the random sums $S_q^{\Omega}$ are converging a.s.-$d\mathbb{P}$ then, in the limit, the variances vanish and the problem would reduce to proving that $\lim_{N \rightarrow \infty} \mathbb{E}\left[ S_{ q }^{\alpha} - S_{ q }^{\beta} \right] = 0$ for all $q \in [0,1)$. We start by investigating the convergence of the sums. According to \cite{Durett} (theorem 8.3, pp. 62) it is sufficient to prove that $\sum_{i=1}^{\infty} \text{Var}(T_i^{\Omega}) < \infty$ in order to get almost sure convergence in $S_q^{\Omega}$. This property is proven in the following lemma:
\begin{lemma}
    Assume that $\left\{ f_i^{\Omega} \right\}_{i=1}^{N} \subset \mathcal{C}^1(\mathbb{R}_+)$ is uniformly bounded from above by some positive constant $C$, independent of $N$, and that the $f_i^{\Omega}(\tau)$ decay at least like $o(1/\tau^p)$, with $p > 3$. Then $\sum_{i=1}^{\infty} \mathrm{Var}(T_i^{\Omega}) < \infty$.
    \label{lemma3}
\end{lemma}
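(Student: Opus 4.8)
\section*{Proof proposal}

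The plan is to control each term by its second moment and to show that this second moment decays like $1/(N-i+1)^2$, so that after reindexing the series becomes comparable with $\sum_M M^{-2}$. Writing $M := N-i+1$ for the number of players left in round $i$ and dropping the labels $i,\Omega$ for brevity, I would use $\mathrm{Var}(T_i^\Omega)\le \mathbb{E}[(T_i^\Omega)^2]$ together with the survival function $\mathbb{P}\{T_i^\Omega>x\}=(1-F(x))^{M}$ already computed above, and start from the layer-cake identity
\begin{equation*}
    \mathbb{E}[(T_i^\Omega)^2]=\int_0^\infty 2x\,(1-F(x))^{M}\,dx .
\end{equation*}
The whole argument then reduces to estimating this single integral, which I would split at a fixed threshold $a>0$.

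On $[0,a]$ I would exploit the assumption $f(0)\ne 0$: by continuity one may fix $a$ so small that $f\ge c_0/2$ on $[0,a]$ with $c_0:=f(0)>0$, whence $F(x)\ge \tfrac{c_0}{2}x$ and $1-F(x)\le e^{-c_0 x/2}$ there. This gives
\begin{equation*}
    \int_0^a 2x\,(1-F(x))^{M}\,dx\le \int_0^\infty 2x\,e^{-c_0 M x/2}\,dx=\frac{8}{c_0^2 M^2}=\mathcal{O}\!\left(\frac{1}{M^2}\right),
\end{equation*}
which is exactly the decay we are after and which persists for every $M\ge 1$.

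On $[a,\infty)$ I would put $\theta:=1-F(a)<1$ and use monotonicity of $F$ to factor $(1-F(x))^{M}\le \theta^{M-2}(1-F(x))^2\le \theta^{M-2}(1-F(x))$; the remaining integral $\int_a^\infty 2x\,(1-F(x))\,dx$ is finite precisely because the hypothesis $f(\tau)=o(1/\tau^{p})$ yields $1-F(x)=o(1/x^{p-1})$, so the integrand is $o(1/x^{p-2})$ and $p>3$ makes it integrable at infinity. Hence the tail is exponentially small in $M$ and in particular $\mathcal{O}(1/M^2)$. Combining the two regions gives $\mathrm{Var}(T_i^\Omega)\le C/(N-i+1)^2$, and reindexing by $M=N-i+1$ turns $\sum_i \mathrm{Var}(T_i^\Omega)$ into a tail of $C\sum_M M^{-2}$, which converges and is bounded uniformly in $N$. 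The two hypotheses now play clearly separated roles: the value $f(0)\ne 0$ together with the uniform bound $C$ produces the $1/M^2$ gain near the origin, while $p>3$ is exactly the condition making each single-sample moment $\int_0^\infty\tau^2 f\,d\tau$ finite, which is what is needed for the rounds with only a handful of players left, i.e. small $M$.

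The step I expect to be the main obstacle is securing the estimate uniformly in $i$: the constants $c_0$, $a$ and $\theta$ must be chosen independently of $i$ so that a single $C$ serves all rounds simultaneously. I would therefore read the hypotheses as providing a uniform lower bound on $f_i^\Omega$ near the origin and a uniform decay rate, and record this explicitly before running the two-region estimate; everything else is a routine comparison of integrals.
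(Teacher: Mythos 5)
Your proposal is correct and follows essentially the same route as the paper: bound $\mathrm{Var}(T_i^\Omega)$ by the second moment, integrate by parts to get $2\int_0^\infty x\,(1-F)^{N-i+1}\,dx$, split at a fixed threshold, use $f_i^\Omega(0)\neq 0$ to control the piece near the origin, and use the decay hypothesis $o(1/\tau^p)$, $p>3$, to make the tail integrable after factoring out a power of $1-F(a)$. One place where your version is actually sharper: near the origin the paper bounds the integral by $\bar\tau\,(1-F(\bar\tau))^{N-i+1}\Delta$ at the maximizer $\bar\tau$ and declares this exponentially small, but $\bar\tau$ itself shrinks like $1/(N-i+1)$, so that bound is delicate as stated; your explicit comparison $1-F(x)\le e^{-c_0x/2}$ followed by exact integration gives the clean rate $\mathcal{O}\bigl(1/(N-i+1)^2\bigr)$, which is precisely what makes the reindexed series comparable to $\sum_M M^{-2}$ and hence convergent uniformly in $N$. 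You are also right to flag uniformity in $i$ as the real sticking point: the hypotheses only assert $f_i^\Omega(0)\neq 0$ for each $i$ and a uniform upper bound $C$, so a uniform lower bound for $f_i^\Omega$ near $0$ (hence uniform $c_0$, $a$, $\theta$) must be read into the assumptions; the paper's proof glosses over the same issue, invoking only the uniform upper bound at the end.
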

\begin{proof}
    By partial integration and using the fact that $f_i^{\Omega}(\tau) = o\left( 1/\tau^{3} \right)$, we have
    \begin{align*}
        \mathbb{E} \left[ \left( T_i^{\Omega} \right)^2 \right] &= \int_0^{\infty} \tau^2 \cdot (N-i+1)f_i^{\Omega}(\tau)\left( 1 - F_i^{\Omega}(\tau) \right)^{N-i} d\tau =\\
        &= 2\int_0^{\infty} \tau \cdot \left( 1 - F_i^{\Omega} (\tau) \right)^{N-i+1} d\tau =\\
        &= 2 \int_0^{\Delta} \tau \cdot \left( 1 - F_i^{\Omega} (\tau) \right)^{N-i+1} d\tau + 2 \int_{\Delta}^{\infty} \tau \cdot \left( 1 - F_i^{\Omega} (\tau) \right)^{N-i+1} d\tau,
    \end{align*}
    where $\Delta$ is an arbitrary positive number. Because of (\ref{dist}) it is clear that the second moment of $T_i^{\Omega}$ tends to zero as $N$ grows and the result can be established by proving that this convergence is sufficiently fast. We investigate the rate of convergence for both of the integrals above. Let $\bar{\tau} = \text{argmax}_{[0,\Delta]} \{ \tau \cdot \left( 1 - F_i^{\Omega} (\tau) \right)^{N-i+1} \}$. Then
    \begin{eqnarray*}
        \int_0^{\Delta} \tau \cdot \left( 1 - F_i^{\Omega} (\tau) \right)^{N-i+1} d\tau \leq \bar{\tau} \left( 1 - F_i^{\Omega} (\bar{\tau}) \right)^{N-i+1} \cdot \Delta \leq \Delta^2 e^{(N-i+1)\mbox{ln}(1-F_i^{\Omega} (\bar{\tau}))},
    \end{eqnarray*}
    and we get exponential convergence near the origin since $\bar{\tau} > 0$ and hence also $F_i^{\Omega} (\bar{\tau})) > 0$ since $f_i^{\Omega}(0) > 0$ by assumption. For the other part, let $k \in \mathbb{N}$ be the least integer so that the integral of the function $\tau \cdot \left( 1 - F_i^{\Omega} (\tau) \right)^k$ converges at infinity. By the asymptotic assumption on $f_i^{\Omega}(\tau)$ an easy calculation shows that $k=1$. Therefore
    \begin{align*}
        \int_{\Delta}^{\infty} \tau \cdot \left( 1 - F_i^{\Omega} (\tau) \right)^{N-i+1} d\tau &\leq \left( 1 - F_i^{\Omega} (\Delta) \right)^{N-i} \int_{\Delta}^{\infty} \tau \cdot \left( 1 - F_i^{\Omega} (\tau) \right) d\tau \leq \\
        & \leq \Delta^2 e^{(N-i)\mbox{ln}(1-F_i^{\Omega} (\Delta))},
    \end{align*}
    and once again we get exponential convergence. Thus, because of the uniform bound of $\left\{ f_i^{\Omega} \right\}_{i=1}^{N}$, we are done.
\end{proof}
\noindent Next we consider the limiting properties of the expectation $\mathbb{E}\left[ S_{ q }^{\alpha} - S_{ q }^{\beta} \right]$. Note that
\begin{equation*}
    \mathbb{E}\left[ S_{ q }^{\alpha} - S_{ q }^{\beta} \right] = \sum_{i=1}^{\lfloor qN \rfloor} \mathbb{E} \left[ T_i^{\alpha} - T_i^{\beta} \right].
\end{equation*}
As with the variances in the previous lemma we are interested in the rate of convergence to zero of $\mathbb{E} \left[ T_i^{\alpha} - T_i^{\beta} \right]$ as $N$ tends to infinity. We summarize the results in a lemma:
\begin{lemma}
    Assume that $\left\{ f_i^{\Omega} \right\}_{i=1}^{N} \subset \mathcal{C}^1(\mathbb{R}_+)$ is uniformly bounded from above by some positive constant $C$, independent of $N$, and that the $f_i^{\Omega}(\tau)$ decay at least like $o(1/\tau^p)$, with $p > 1$, when $\tau \rightarrow \infty$. Then $\lim_{N \rightarrow \infty}\mathbb{E}\left[ S_{ q }^{\alpha} - S_{ q }^{\beta} \right] = 0$.
    \label{lemma4}
\end{lemma}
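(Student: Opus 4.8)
The plan is to reduce everything to a sharp, uniform estimate on the individual terms $\mathbb{E}[T_i^\alpha - T_i^\beta]$ and then sum, exploiting that although each $\mathbb{E}[T_i^\Omega]$ is of size $\mathcal{O}(1/N)$ (so the individual expectations do \emph{not} vanish in the limit), the common behaviour of the two strategies at the origin makes the \emph{difference} one order smaller. First I would write each expectation in tail form,
\begin{equation*}
    \mathbb{E}[T_i^\Omega] = \int_0^\infty \mathbb{P}\{T_i^\Omega > \tau\}\, d\tau = \int_0^\infty \left( 1 - F_i^\Omega(\tau) \right)^{N-i+1} d\tau,
\end{equation*}
so that
\begin{equation*}
    \mathbb{E}[T_i^\alpha - T_i^\beta] = \int_0^\infty \left[ \left( 1 - F_i^\alpha(\tau)\right)^{N-i+1} - \left( 1 - F_i^\beta(\tau) \right)^{N-i+1} \right] d\tau.
\end{equation*}

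Following the splitting already used in Lemma \ref{lemma3}, I would fix an arbitrary $\Delta > 0$ and cut this integral into $[0,\Delta]$ and $[\Delta,\infty)$. On the outer piece each term is controlled by $(1 - F_i^\Omega(\Delta))^{N-i+1}$, up to a constant produced by the $o(1/\tau^p)$, $p>1$, decay that renders the tail integrable; since $F_i^\Omega(\Delta) > 0$ this is exponentially small in $N-i+1$. Because $N-i+1 \geq \lceil (1-q)N \rceil \to \infty$ for every $i \leq \lfloor qN \rfloor$, the smallness is uniform across the relevant range, and summing $\lfloor qN \rfloor$ such terms still leaves an exponentially small quantity. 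Hence only the behaviour on $[0,\Delta]$ contributes in the limit.

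On $[0,\Delta]$ I would apply the mean value theorem to $x \mapsto x^{N-i+1}$: writing $m := N-i+1$ and letting $\xi_i(\tau)$ lie between $F_i^\alpha(\tau)$ and $F_i^\beta(\tau)$,
\begin{equation*}
    \left( 1 - F_i^\alpha \right)^{m} - \left( 1 - F_i^\beta \right)^{m} = -\,m\left( 1 - \xi_i(\tau) \right)^{m-1}\left( F_i^\alpha(\tau) - F_i^\beta(\tau) \right).
\end{equation*}
The decisive input is the hypothesis $f_i^\alpha(0) = f_i^\beta(0)$: it cancels the linear term in $F_i^\alpha - F_i^\beta$, so that $F_i^\alpha(\tau) - F_i^\beta(\tau) = \mathcal{O}(\tau^2)$ near the origin. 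Simultaneously $1 - \xi_i(\tau) \leq 1 - \min(F_i^\alpha,F_i^\beta) \leq e^{-c\tau}$ for a constant $c>0$ comparable to $f_i^\Omega(0)$, so the integrand is dominated by $m\,\tau^2 e^{-c(m-1)\tau}$, and $\int_0^\infty m\,\tau^2 e^{-c(m-1)\tau}\,d\tau = \mathcal{O}(1/m^2)$. This yields
\begin{equation*}
    \left| \mathbb{E}[T_i^\alpha - T_i^\beta] \right| \lesssim \frac{1}{(N-i+1)^2} + (\text{exponentially small}),
\end{equation*}
and, re-indexing by $j = N-i+1$,
\begin{equation*}
    \left| \mathbb{E}[S_q^\alpha - S_q^\beta] \right| \leq \sum_{i=1}^{\lfloor qN \rfloor}\left| \mathbb{E}[T_i^\alpha - T_i^\beta] \right| \lesssim \sum_{j = \lceil (1-q)N \rceil}^{N} \frac{1}{j^2} = \mathcal{O}\!\left( \frac{1}{(1-q)N} \right) \to 0.
\end{equation*}

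The step I expect to be the main obstacle is the second order bound $F_i^\alpha(\tau) - F_i^\beta(\tau) = \mathcal{O}(\tau^2)$ made \emph{uniformly in} $i$, together with the uniformity of the constant $c$. The cancellation of the linear term is free from $f_i^\alpha(0) = f_i^\beta(0)$, but promoting it to a genuine $\tau^2$ bound with an $i$-independent coefficient, and keeping $c \asymp f_i^\Omega(0)$ bounded below, requires more than pointwise $\mathcal{C}^1$ smoothness: a uniform Lipschitz bound on the $f_i^\Omega$ (equivalently a uniform bound on $(f_i^\Omega)'$ near $0$) and a uniform lower bound on $f_i^\Omega(0)$ would suffice, whereas $i$-growing constants could in principle overwhelm the $\sum_j j^{-2}$ gain. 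I would therefore either strengthen the hypotheses to include such uniform control near the origin, or argue that for the admissible $i$ (those with $m = N-i+1 \geq (1-q)N$ large) the concentrating kernel $m(1-\xi_i)^{m-1}$ confines the whole analysis to a shrinking neighbourhood of $\tau = 0$, where the shared value $f_i^\Omega(0)$ governs both strategies and the leading contributions coincide.
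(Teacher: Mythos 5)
Your proof is essentially correct, but it takes a genuinely different --- and in fact more careful --- route than the paper, whose entire proof of this lemma is the single sentence that it ``goes exactly like the proof of Lemma \ref{lemma3}.'' Taken literally, the Lemma \ref{lemma3} argument applied termwise would only bound each $|\mathbb{E}[T_i^{\alpha}-T_i^{\beta}]|$ by $\mathbb{E}[T_i^{\alpha}]+\mathbb{E}[T_i^{\beta}]=\mathcal{O}\bigl(1/(N-i+1)\bigr)$, and summing $\lfloor qN\rfloor$ such terms gives $\mathcal{O}(1)$, not $o(1)$; the paper never explains where the extra order of smallness comes from. You supply exactly the missing ingredient: the tail representation $\mathbb{E}[T_i^{\Omega}]=\int_0^{\infty}(1-F_i^{\Omega})^{N-i+1}\,d\tau$, the mean value theorem on $x\mapsto x^{m}$, and the cancellation $F_i^{\alpha}(\tau)-F_i^{\beta}(\tau)=\mathcal{O}(\tau^{2})$ forced by $f_i^{\alpha}(0)=f_i^{\beta}(0)$, which upgrades the termwise bound to $\mathcal{O}\bigl(1/(N-i+1)^{2}\bigr)$ and makes the sum $\mathcal{O}\bigl(1/((1-q)N)\bigr)$. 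This is the right mechanism, and your identification of it is precisely what the paper's ``same as Lemma \ref{lemma3}'' gloss obscures. Your approach buys an actual quantitative rate and makes visible why the hypothesis $f_i^{\alpha}(0)=f_i^{\beta}(0)\neq 0$ (stated in the surrounding text rather than in the lemma itself) is indispensable.

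Two caveats, both of which you already flag and which I agree are real: (i) the uniformity in $i$ of the $\mathcal{O}(\tau^{2})$ constant and of the lower bound $c\asymp f_i^{\Omega}(0)$ requires a uniform bound on $(f_i^{\Omega})'$ near the origin and a uniform lower bound on $f_i^{\Omega}(0)$, neither of which is literally among the stated hypotheses (which give only the uniform upper bound $C$ and the tail decay); this is a gap in the paper's formulation rather than in your argument. (ii) Your final bound $\mathcal{O}\bigl(1/((1-q)N)\bigr)$ degenerates at $q=1$, and indeed for $q=1$ the last term is $\mathbb{E}[\tau^{\alpha}]-\mathbb{E}[\tau^{\beta}]$ for a single draw, which need not vanish --- so the endpoint claimed in Theorem \ref{theorem2} genuinely requires separate treatment. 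Neither point undermines your proof for fixed $q<1$.
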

\noindent The proof of this lemma goes exactly like the proof of Lemma \ref{lemma3}. Now, using the Chebyshev estimate in (\ref{chebyshev}) together with the results in Lemma \ref{lemma3} and \ref{lemma4} we have reached the following conclusion:
\begin{theorem}
    Let $\left\{ f_i^{\Omega} \right\}_{i=1}^{N} \subset \mathcal{C}^1(\mathbb{R}_+)$ be uniformly bounded from above by some positive constant $C$, independent of $N$, and assume that $f_i^{\Omega}(\tau)$ decay at least like $o(1/\tau^p)$, with $p > 3$. Then
    \begin{equation*}
        \lim_{N \rightarrow \infty}\mathbb{P} \left\{ \left| S_{ q }^{\alpha} - S_{ q }^{\beta} \right| \geq \delta \right\} = 0
    \end{equation*}
    for all $\delta > 0$ and all $q \in (0,1]$.
    \label{theorem2}
\end{theorem}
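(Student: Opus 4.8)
The plan is to start from the Chebyshev--Markov estimate already recorded in \eqref{chebyshev}, whose right-hand side, after combining the mean terms, is $\tfrac{1}{\delta^2}\big(\mathrm{Var}(S_q^\alpha)+\mathrm{Var}(S_q^\beta)+(\mathbb{E}[S_q^\alpha]-\mathbb{E}[S_q^\beta])^2\big)$. This reduces the theorem to showing that each of these three quantities tends to $0$ as $N\to\infty$; once that is done the conclusion is immediate. Since every $S_q^\Omega=\sum_{i=1}^{\lfloor qN\rfloor}T_i^\Omega$ is a sum of independent summands (the rounds draw independent samples, and the $\alpha$- and $\beta$-experiments are independent), the variance of each sum decouples as $\mathrm{Var}(S_q^\Omega)=\sum_{i=1}^{\lfloor qN\rfloor}\mathrm{Var}(T_i^\Omega)$, which is what makes the term-by-term estimates available.

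For the squared mean gap I would simply invoke Lemma \ref{lemma4}, whose hypotheses are weaker than those assumed here ($p>1$ rather than $p>3$): it gives $\mathbb{E}[S_q^\alpha-S_q^\beta]=\sum_{i=1}^{\lfloor qN\rfloor}\mathbb{E}[T_i^\alpha-T_i^\beta]\to 0$, so $(\mathbb{E}[S_q^\alpha]-\mathbb{E}[S_q^\beta])^2\to 0$. The mechanism behind this is the matching condition $f_i^\alpha(0)=f_i^\beta(0)$, which forces the leading $\sim 1/((N-i+1)f_i^\Omega(0))$ behaviour of the two expectations to coincide, leaving only a higher-order remainder that sums to something vanishing in $N$.

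For the two variances I would feed in the quantitative content produced \emph{inside} the proof of Lemma \ref{lemma3}, rather than just its stated conclusion. There $\mathbb{E}[(T_i^\Omega)^2]$ was split into two pieces, each bounded by a factor decaying exponentially in the number of competing samples, namely through terms of the form $e^{(N-i)\ln(1-F_i^\Omega(\Delta))}$ with $\ln(1-F_i^\Omega(\Delta))<0$. Because the index $i$ runs only up to $\lfloor qN\rfloor$, every surviving summand has $N-i\geq(1-q)N$, so each variance is exponentially small and the partial sum $\sum_{i=1}^{\lfloor qN\rfloor}\mathrm{Var}(T_i^\Omega)$ is dominated by a geometric series that tends to $0$. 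Substituting the three vanishing limits into \eqref{chebyshev} then finishes the proof.

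The main obstacle is uniformity together with the endpoint behaviour as $q\uparrow 1$. To upgrade the pointwise decay of Lemma \ref{lemma3} into a vanishing bound for the partial sum I need the logarithmic rates $\ln(1-F_i^\Omega(\Delta))$ to be bounded away from $0$ uniformly in $i$, which really asks for a uniform positive lower bound for the densities near the origin rather than merely $f_i^\Omega(0)\neq 0$; extracting this from the standing $\mathcal{C}^1$ and uniform-boundedness hypotheses is the delicate point. The exponential gain is lost precisely when $i$ is close to $N$ — the final rounds, where the minimum is taken over only a handful of samples and the summands are genuinely of order one — so the argument must be organized so that $N-i$ stays large for every term that survives in $S_q^\Omega$. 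This is exactly what keeps the clean estimate valid for $q$ bounded below $1$ and makes the passage to the value $q=1$ the subtlest part of the analysis.
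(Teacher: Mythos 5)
Your proposal follows essentially the same route as the paper: the Chebyshev--Markov bound (\ref{chebyshev}), with Lemma \ref{lemma4} handling the squared mean gap and the exponential estimates from the proof of Lemma \ref{lemma3} handling the two variance sums --- indeed the paper's own ``proof'' consists of nothing more than citing those two lemmas together with (\ref{chebyshev}), so your quantitative account of why $\sum_{i\le \lfloor qN\rfloor}\mathrm{Var}(T_i^{\Omega})\to 0$ (each term decaying like $e^{-c(N-i)}$ with $N-i\ge(1-q)N$) is, if anything, more complete than what is written there. Your closing worries are also well placed: the paper never addresses the endpoint $q=1$, where the final summands are minima over only $O(1)$ samples, so their variances do not vanish and the variance term in the Chebyshev bound is genuinely not small --- a gap in the paper's statement rather than in your argument, which is sound for $q$ bounded away from $1$.
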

\noindent The intuitive meaning of Theorem \ref{theorem2} is that the time evolution of an $N$-player dynamic model WA with a large $N$ is completely determined by the initial values of the mixed strategies that are being used. In terms of the mean field density $m(t,\tau)$ from the introduction this means the $t$-marginal $m(t,0)$. Thus, in the $N$-player limit the $\tau$-dependence, for $\tau > 0$, gets superfluous and it suffices to choose a strategy in $t$. In a sense this makes the dynamic model look like a static model since it is enough for each player to decide upon "how to quit". In \cite{HaigCannings}, the static model is the $N$-player version of the WA differing from the dynamic model in that the remaining players are not allowed to reconsider their waiting times as other players drop out. This model has not yet been thoroughly investigated and the analysis done so far in the $N$-player case neither give explicit results, nor does it give any concrete answers for when and if there exist evolutionary stable strategies. Due to Theorem \ref{theorem2} there might be reason to believe that the two models coincide in the $N$-player limit. In what follows we will investigate this connection in order to find more definite results for the static model.

\section{Convergence in the static model}
In the static model of the $N$-player WA the players begin by choosing their waiting times for the upcoming game. Once the game has started the participants are not allowed to reappraise their bids and (after a reordering of the players) we get an increasing sequence of waiting times $t_{1}, t_{2}, ..., t_{N}$. The prizes $\{ V_k \}_{k=1}^N \subset \mathbb{R}_+$ are handed out to each player according to this order, i.e. the player with the least waiting time receives $V_1$ and pays $t_{1}$, the player with the second least receives $V_2$ and pays $t_{2}$, and so forth until the last player who receives $V_N$ and pays $t_{N-1}$. Thus, the game ends once the second last player quits. In contrast to the dynamic model, which was a repeated game, the static model of WA is a one-shot game. We are interested in finding a probability density $g_N(t)$ being a mixed strategy ESS in this $N$-player model. In particular, since every ESS also is a Nash-equilibrium, such a strategy must have the property that if one player chooses to play a pure strategy $\delta_x$, $x \in \mathbb{R}_+$, and the rest of the players are playing according to $g_N(t)$, the expected payoff of playing the pure strategy must be constant regardless of the value of $x$. For the expected payoff we adopt the following notation:
\begin{definition}
\label{definition2}
Let $\{ \mu_k \}_{k=1}^N$ be a sequence of probability measures on $\mathbb{R}_+$ representing a choice of mixed strategies by the players in an $N$-player static model WA and let $\mu_{-i}~:=~\{ \mu_k \}_{k=1, k\neq i}^N$ for all $i = 1, ..., N$. The expected payoff to player $i$ when playing $\mu_i$ against $\mu_{-i}$ is denoted $\mathcal{J}_N(\mu_i | \mu_{-i}) = \mathcal{J}_N(\mu_i | \mu_1, ..., \mu_{i-1}, \mu_{i+1}, ..., \mu_N)$. If a given number $r$, $2 \leq r \leq N-1$, of the measures in $\mu_{-i}$ are equal to the same measure $\mu$ we write $(\mu, ..., \mu) = \mu^{\oplus r}$\footnote{In this text the measures in $\{ \mu_k \}_{k=1}^N$ will often be given by a density function $\varphi_k(t)$, i.e. $\mu_k = \varphi_k(t)dt$, and we will in those cases abuse the notation in Definition \ref{definition2} by identifying $\mu_k$ with $\varphi_k$.}.
\end{definition}
Playing $\delta_x$ in a population where all opponents play $g_N(t)$, and $G_N(t)$ is the cdf of $g_N(t)$, we have that
\begin{align}
    \nonumber \mathcal{J}_N\left(\delta_x| g_N^{\oplus(N-1)}\right) =& \sum_{r=0}^{N-2} \left( V_{r+1} - x \right) \binom{N-1}{r}G_N(x)^r\left( 1 - G_N(x) \right)^{N-r} +\\
    &+ \int_0^x \left( V_N - y \right) d\left( G_N(y)^{N-1} \right).
\end{align}
If we require that $d/dx \left[ \mathcal{J}_N\left(\delta_x| g_N^{\oplus(N-1)}\right) \right]= 0$ we end up with the necessary condition for $G_N$ to be an ESS:
\begin{equation}
\left\{
    \begin{array}{l}
    \frac{dG_N}{dx} = \frac{1 - G_N^{N-1}}{(N-1)\sum_{r=0}^{N-2} c_{r} \binom{N-2}{r} G_N^r\left( 1-G_N \right)^{N-2-r}} =: \Xi_N(G_N)\\
     G_N(0) = 0,
    \end{array}
\right.
\label{autonomous}
\end{equation}
where $c_{r} := \left( V_{r+2} - V_{r+1} \right)$. As in the previous section we will from now on assume the prize sequence $\{ V_k \}_{k=1}^N$ to be positive and strictly increasing. This means that $c_{r} > 0$ for all $r = 0, ..., N-2$ and hence the function $\Xi_N(\xi)$ is well defined, positive and continuous for all $0 \leq \xi \leq 1$. We also note that $\Xi_N(1) = 0$ and $\Xi_N'(1) < 0$, i.e. (\ref{autonomous}) is asymptotically stable in $\xi = 1$, and since $dG_N/dx(0) > 0$ we get by basic properties of autonomous equations (see e.g. \cite{Olver}) that the unique solution to (\ref{autonomous}), given any increasing prize sequence, is the cdf of some probability density function $g_N(t)$. In the following proposition we prove that the limiting solution of the static model coincides with the limiting solution of the dynamic model.
\begin{proposition}
Let $V(x)$ be an increasing $\mathcal{C}^1$-function on the unit interval such that $V(0)=0$ and define the sequence $\left\{ V_k \right\}_{k=1}^N$ by $V_k := V(k/N)$. Then, if $G_N$ is the unique solution of problem (\ref{autonomous}) with the given prize sequence, it holds that
\begin{equation*}
    G_N(t) \longrightarrow
    \left\{
        \begin{array}{l}
            V^{-1}\left( t \right), \hspace{0.5cm} 0 \leq t \leq V(1)\\
            1, \hspace{0.5cm} t > V(1)
        \end{array}
    \right.
\end{equation*}
uniformly as $N \rightarrow \infty$.
\label{proposition1}
\end{proposition}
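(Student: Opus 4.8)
The plan is to pass to the \emph{inverse} function, where problem (\ref{autonomous}) becomes an explicit integral, and to identify its limit through the classical Bernstein approximation theorem. Since $\Xi_N(\xi) > 0$ for $\xi \in [0,1)$, the solution $G_N$ of (\ref{autonomous}) is strictly increasing, hence invertible, and its inverse satisfies $dx/d\xi = 1/\Xi_N(\xi)$ with $x(0)=0$. Writing $D_N(\eta) := (N-1)\sum_{r=0}^{N-2} c_{r}\binom{N-2}{r}\eta^r(1-\eta)^{N-2-r}$ for the denominator appearing in (\ref{autonomous}), we therefore have the closed form
\begin{equation*}
    G_N^{-1}(\xi) = \int_0^{\xi}\frac{d\eta}{\Xi_N(\eta)} = \int_0^{\xi}\frac{D_N(\eta)}{1-\eta^{N-1}}\,d\eta, \qquad \xi \in [0,1).
\end{equation*}
The strategy is to show that the integrand converges to $V'(\eta)$ on every interval $[0,1-\varepsilon]$, so that $G_N^{-1}(\xi) \to V(\xi)$, and then to invert this convergence and treat the tail $t \geq V(1)$ separately.

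The crux is the behaviour of $D_N$. Since $c_{r} = V\left(\frac{r+2}{N}\right)-V\left(\frac{r+1}{N}\right) = V'(\zeta_r)/N$ for some $\zeta_r \in \left(\frac{r+1}{N},\frac{r+2}{N}\right)$ by the mean value theorem, we may write
\begin{equation*}
    D_N(\eta) = \frac{N-1}{N}\sum_{r=0}^{N-2} V'(\zeta_r)\binom{N-2}{r}\eta^r(1-\eta)^{N-2-r}.
\end{equation*}
Up to the harmless factor $(N-1)/N \to 1$, this is the Bernstein polynomial $B_{N-2}[V'](\eta)$ with the nodes $r/(N-2)$ replaced by $\zeta_r$. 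Since $|\zeta_r - r/(N-2)| = \mathcal{O}(1/N)$ uniformly in $r$, and $V'$ is uniformly continuous on $[0,1]$, the displayed sum differs from $B_{N-2}[V'](\eta)$ by at most the modulus of continuity of $V'$ at scale $\mathcal{O}(1/N)$ (the binomial weights forming a probability distribution in $r$), which tends to $0$ uniformly. As $B_{N-2}[V'] \to V'$ uniformly on $[0,1]$ by Bernstein's theorem, we conclude $D_N \to V'$ uniformly on $[0,1]$. Combined with $1-\eta^{N-1} \to 1$ uniformly on $[0,1-\varepsilon]$ (there $\eta^{N-1} \leq (1-\varepsilon)^{N-1}$), this gives $D_N(\eta)/(1-\eta^{N-1}) \to V'(\eta)$ uniformly on $[0,1-\varepsilon]$, whence $G_N^{-1}(\xi) \to \int_0^{\xi} V'(\eta)\,d\eta = V(\xi)$ uniformly on $[0,1-\varepsilon]$. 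Note that no lower bound on $V'$ is used, so any zeros of $V'$ cause no trouble.

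It remains to invert this and to handle the endpoint. Because $V$ is a continuous strictly increasing bijection of $[0,1]$ onto $[0,V(1)]$ and each $G_N$ is increasing, the convergence $G_N^{-1} \to V$ on $[0,1)$ transfers to $G_N(t) \to V^{-1}(t)$ for every $t \in [0,V(1))$: fixing $t_0 = V(\xi_0)$ with $\xi_0 < 1$ and using $G_N^{-1}(\xi_0-\eta) \to V(\xi_0-\eta) < t_0 < V(\xi_0+\eta) \leftarrow G_N^{-1}(\xi_0+\eta)$, monotonicity of $G_N$ yields $\xi_0-\eta < G_N(t_0) < \xi_0+\eta$ for $N$ large. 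For $t \geq V(1)$, monotonicity gives $\liminf_N G_N(t) \geq \lim_N G_N(s) = V^{-1}(s)$ for each $s < V(1)$, and letting $s \uparrow V(1)$ forces $G_N(t) \to 1$. Hence $G_N$ converges pointwise on $[0,\infty)$ to the stated limit $G$, which is a \emph{continuous} cumulative distribution function. Since every $G_N$ is itself a cumulative distribution function, pointwise convergence to a continuous limit upgrades automatically to uniform convergence by P\'olya's theorem, which is the assertion. The limit inverts $V$, exactly matching $q(t) = V^{-1}(t)$ from (\ref{time}), establishing the coincidence with the dynamic model.
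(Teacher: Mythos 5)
Your proof is correct, and it takes a genuinely different route from the paper's. Both arguments rest on the same key ingredient — the Bernstein-polynomial identification $(N-1)\sum_{r=0}^{N-2}c_r\binom{N-2}{r}\xi^r(1-\xi)^{N-2-r}\to V'(\xi)$ uniformly — but the paper stays with the forward ODE: it treats $y'=1/V'(y)$ as the limit equation, compares $y_N$ with $y=V^{-1}$ through a Gr\"onwall estimate involving a Lipschitz constant $C_N$ of $\Xi_N$, and then upgrades pointwise to uniform convergence by a monotonicity lemma of Buchanan--Hildebrandt. You instead invert the autonomous equation to obtain the explicit representation $G_N^{-1}(\xi)=\int_0^{\xi}D_N(\eta)\,(1-\eta^{N-1})^{-1}\,d\eta$, pass to the limit under the integral on $[0,1-\varepsilon]$, and transfer the convergence back by monotonicity and P\'olya's theorem. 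Your route buys two things: (i) it needs no stability estimate for the ODE at all — in particular it does not require the Lipschitz constants of $\Xi_N$ to be uniformly bounded in $N$, a point the paper asserts but which is delicate, since $\Xi_N'$ contains the term $(N-1)\xi^{N-2}/D_N(\xi)$, of order $N$ near $\xi=1$, and since $D_N'$ involves $N\,\omega_{V'}(1/N)$, which need not stay bounded for $V$ merely $\mathcal{C}^1$; and (ii) as you note, no positive lower bound on $V'$ is used, whereas the limit equation $y'=1/V'(y)$ is singular wherever $V'$ vanishes. The small price is that the inversion itself must be justified — that $G_N$ is a strictly increasing bijection of $[0,\infty)$ onto $[0,1)$, which holds because $\Xi_N>0$ on $[0,1)$ and $1/\Xi_N(\eta)\sim c_{N-2}/(1-\eta)$ is non-integrable at $\eta=1$ — and your appeal to P\'olya's theorem plays exactly the role of the paper's Buchanan--Hildebrandt citation (the two statements are essentially the same: monotone functions converging pointwise to a continuous limit converge uniformly). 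Your final sentence identifying the limit with $q(t)=V^{-1}(t)$ matches the paper's conclusion.
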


\begin{proof}
The key idea to prove this proposition is to consider the limiting equation of (\ref{autonomous}) as $N \rightarrow \infty$. For this we use a theorem by Bernstein (see e.g. \cite{Lorentz}) saying that if $f: [0,1] \rightarrow \mathbb{R}$ is a bounded continuous function and $p_{n\nu}(x)$, $\nu = 0, ..., n$, is the $\nu$'th Bernstein polynomial of degree $n$ then
\begin{equation*}
    \lim_{n \rightarrow \infty} \sum_{\nu=0}^{n} f\left( \frac{\nu}{n} \right) p_{n\nu}(x) = f(x),
\end{equation*}
and the relation holds uniformly on $[0,1]$. Thus, considering the denominator in (\ref{autonomous}) as a function on $[0,1]$, replacing $G_N$ by a fixed $x$, we get uniform convergence to $V'$ since $(N-1)c_r = V'(r/N) + \mathcal{O}(1/N)$. Consequently
\begin{equation*}
    \lim_{N \rightarrow \infty} \Xi_N(x) = \frac{1}{V'(x)},
\end{equation*}
uniformly on the interval $[0,1)$. We therefore have the limit equation $y' = 1/V'(y)$ satisfying the initial value $y(0)=0$, which admits the unique solution $y(x) = V^{-1}(x)$ on $[0,V(1)]$. Let $y_N$ be the unique solution to (\ref{autonomous}) restricted to the interval $[0,V(1)]$ and consider the absolute value of the difference:
\begin{align*}
    |y_N(x) - y(x)| &= \left| \int_0^x \Xi_N(y_N(t)) - \frac{1}{V'(y(t))} dt  \right| \\
    & \leq \int_0^x \left| \Xi_N(y_N(t)) - \Xi_N(y(t)) \right| dt + \int_0^x \left| \Xi_N(y(t)) - \frac{1}{V'(y(t))} \right| dt \\
    & \leq C_N \int_0^x \left| y_N(t) - y(t) \right| dt + \varepsilon_N x,
\end{align*}
where $C_N$ is a Lipschitz constant (uniformly bounded over $N$) of $\Xi_N$ and $\lim_{N \rightarrow \infty}\varepsilon_N =~0$ by the uniform convergence of $\Xi_N$. Thus, by a Gr\"{o}nwall estimate we get the point wise upper bound
\begin{equation}
    |y_N(x) - y(x)| \leq \varepsilon_N x e^{xC_N}, \hspace{0.5cm} x \in [0,V(1)]
    \label{Gronwall}
\end{equation}
and hence point wise convergence of $y_N$ in this interval. Point wise convergence on a compact interval does not in general imply uniform convergence, but since $\{ y_N \}_{N=2}^{\infty}$ is a sequence of monotone functions that converge point wise to a continuous function we have by a theorem in \cite{BuchananHildebrandt} in this case even uniform convergence. For the rest of the half axis, i.e. $x \in (V(1), \infty)$, we have uniform convergence towards 1 by monotonicity and the fact that $y(V(1)) = V^{-1}(V(1)) = 1$. This finally proves the theorem.
\end{proof}
\noindent Proposition \ref{proposition1} proves that the limiting properties of the static and the dynamic models coincide. In the $N$-player limit of the static model we have concluded that, given an increasing prize function $V(x)$ an ESS, if it exists, must be the strategy of choosing a waiting time according to the probability density $\dot{q}(t)$. On the other hand, we also know that the same strategy is reached as the limit of ESS strategies in the dynamic $N$-player model. Therefore, since Theorem \ref{theorem2} suggests that the dynamic model is indistinguishable from the static model in the $N$-player limit, one might have hope for the strategy $\dot{q}(t)$ to be an ESS even in the static model. A more thorough analysis of the limiting strategy, carried out in Section \ref{section5},  shows that this is false. The results will nevertheless give hints on how to proceed with the $N$-player static model.\\
\indent Another notable result from this section is that the candidate ESS-solution $g_N(t)$ has support in all of $\mathbb{R}_+$ (due to asymptotic stability) despite the fact that the prizes are bounded from above by $V_N$. This property might feel somewhat unnatural since the mass of $g_N(t)$ in $(V_N,\infty)$ would contribute negatively to the expected payoff. This is not the case, however, since the last player quitting is paying the time cost of the second last player. The tail of $g_N(t)$ is important for the strategy to be an ESS, and even to be a Nash-equilibrium. Indeed, if we consider an $N$-player game of the static WA in which all players use a strategy $\varphi(t)dt$ such that $\varphi \in \mathcal{C}[0,V_N]$ one could choose to play $\delta_{V_N}$ which would have an expected payoff strictly larger than that of playing $\varphi(t)dt$. A good way to think of the asymptotic behaviour of $g_N(t)$ (for large values of $t$) is that the mass in $(V_N,\infty)$ should be small enough for the probability of having more that one out of $N$ trails in this interval to be negligibly small. Hence if an unlucky player receives a waiting time much larger than $V_N$ he will most likely be "saved" by the second last player. By this one would expect the tail to become lighter and lighter as the number of players increase since the probability of getting several players in $(V_N,\infty)$ otherwise would increase. This property is partly supported by the conclusion of Proposition \ref{proposition1} saying that the limit strategy is confined to the interval $[0,V_N]$.

\section{Properties of the limit strategy and the $N$-player static model}
\label{section5}
In this section we investigate the game theoretic properties of the limiting strategy $\dot{q}(t)$. In particular we are interested in knowing wether this strategy represents an ESS or not. The first problem one encounters when initiating this analysis is that the definition of a mixed strategy ESS does not make sense in the $N$-player limit since a finite number of invading players do not affect an infinite population. In order to give a more suitable definition we consult the abstract, but still standard, measure theoretic approach to games and game theory. Such an approach is given e.g. by Balder \cite{Balder}, and goes as follows:\\
\indent Let $\mathfrak{P} = (P, \mathcal{P},\mu)$ be a finite measure space such that, for convenience, $\mu(P) = 1$. Each $p \in P$ can be thought of as a \textit{player} and hence $\mathfrak{P}$ is the space of players. Let $A$ be a metric space of \textit{actions}, or pure strategies, available to the players in $P$\footnote{In \cite{Balder} the setting is even more general with each player $p \in P$ having access to some action set $A_p \in \mathcal{B}(A).$}, let $\mathcal{B}(A)$ be the Borel $\sigma$-algebra, and consider the pair $\mathfrak{A} = (A, \mathcal{B}(A))$. The set of all probability measures on $\mathfrak{A}$ is denoted by $\mathcal{M}_1(A)$. A \textit{mixed (action) profile} is a Young measure $\Delta : P \rightarrow \mathcal{M}_1(A)$ such that for $\mu$-a.e. $p$ it holds that $\Delta(p)(A) = 1$. Recall that for $\Delta$ to be a Young measure we require the map $p \mapsto \Delta(p)(B)$ to be $\mathcal{P}$-measurable for all fixed $B \in \mathcal{B}(A)$ . The set of all such $\Delta$'s is denoted $\mathcal{R}$. Intuitively a given $\Delta \in \mathcal{R}$ contains information of what mixed strategies the players in $P$ have chosen. If $f$ is a measurable selection of the multifunction $\Sigma : p \mapsto A$ and we choose $\epsilon_f \in \mathcal{R}$ so that $\epsilon_f (p) := \delta_{f(p)}$ (Dirac measure at $f(p)$) we see that the \textit{pure} action profiles are contained in $\mathcal{R}$. Finally we define the \textit{payoff function} connected to player $p \in P$ as a function $\mathcal{J}_p: P \times \mathcal{R} \rightarrow [-\infty, \infty)$. Thus, $\mathcal{J}_p(p | \Delta)$ measures the benefits (or losses) of player $p$ when the population is playing $\Delta \in \mathcal{R}$. For our purpose it suffices to consider games in which all players in $P$ share the same explicit form of the payoff function so that $\mathcal{J}_p(p|.) = \mathcal{J}(p|.)$ for all $p \in P$. In this setting we define a game as a triplet $\mathfrak{G} = (\mathfrak{P}, \mathfrak{A}, \mathcal{J})$.\\
\indent Given a mixed action profile $\Delta \in \mathcal{R}$ and a player $p \in P$ it is natural to think of $\Delta$ as being made up of two parts. One "internal" part representing the strategy chosen by player $p$ (namely $\Delta(p)$) and one "external" part for the strategies of the opponents of $p$. In many game theoretic models (see \cite{Balder}) the internal and external parts of $\Delta$ are reflected in an explicit form of the payoff function called \textit{internal-external form}. For this we need (i) a space $Y$ that we call the space of \textit{profile statistics} of the game, (ii) a \textit{utility} function $\mathcal{U}:A \times Y \rightarrow [-\infty,\infty)$ and (iii) a mapping $e:\mathcal{R} \rightarrow Y$, called the \textit{mixed externality}, such that $\mathcal{J}$ can be written as:
\begin{equation*}
    \mathcal{J}(p | \Delta) = \int_A \mathcal{U}(x,e(\Delta))\Delta(p)(dx), \hspace{1cm} p \in P
\end{equation*}
With the abstract framework at hand we can now extend Balder's ideas to give a measure theoretical notion of an ESS. Before doing so we recall the definition of the $N$-player ESS according to Palm \cite{Palm}.
\begin{definition}
A probability measure $\pi_*$ is said to be an \textit{evolutionary stable strategy}, or \textit{ESS}, if either
\begin{equation*}
    \text{(i)} \hspace{0.5cm} J_N\left(\pi_* | \pi_*^{\oplus(N-1)}\right) > J_N\left(\varphi | \pi_*^{\oplus(N-1)}\right) \hspace{0.6cm}
\end{equation*}
for any other probability measure $\varphi \neq \pi_*$, or else, if there is a $\bar{\varphi}$ such that equality holds in (i), then
\begin{equation*}
    \text{(ii)} \hspace{0.5cm} J_N\left(\pi_* | \pi_*^{\oplus(N-2)}, \bar{\varphi} \right) > J_N\left(\bar{\varphi} | \pi_*^{\oplus(N-2)}, \bar{\varphi}\right).
\end{equation*}
\end{definition}
\noindent For a continuum of players we extend the above definition to the following:
\begin{definition}
Let $\mathfrak{G} = (\mathfrak{P}, \mathfrak{A}, \mathcal{J})$ be a game admitting a decomposition of $\mathcal{J}$ to internal-external form with space of profile statistics $Y$, utility function $\mathcal{U}$ and mixed externality $e$. Let $\varepsilon > 0$ and pick a Borel set $P_{\varepsilon} \in \mathcal{P}$ such that $0 < \mu(P_{\varepsilon}) \leq \varepsilon$ and define the mixed action profile
\begin{equation}
    \Pi_{(\pi, \varphi)}^{P_{\varepsilon}}(p) :=
     \left\{
        \begin{array}{l}
            \pi, \text{if } p \in P \backslash P_{\varepsilon}\\
            \varphi, \text{if } p \in P_{\varepsilon}
        \end{array}
    \right.
\end{equation}
where $\pi, \varphi \in \mathcal{M}_1(A)$. We say that $\pi_* \in \mathcal{M}_1(A)$ is an ESS of $\mathfrak{G}$ if either of the following holds for all $\varepsilon$ small enough and independently of $P_{\varepsilon}$:
\begin{equation*}
   \text{(i)} \hspace{0.5cm} \int_A \mathcal{U}(x,e(\Pi_{(\pi_*, \varphi)}^{P_{\varepsilon}}))\pi_*(dx) > \int_A \mathcal{U}(x,e(\Pi_{(\pi_*, \varphi)}^{P_{\varepsilon}}))\varphi(dx),
\end{equation*}
for any $\varphi \in M_1^+(A) \backslash \{ \pi_* \}$ and all $p \in P \backslash P_{\varepsilon}$, or else, if there is a $\bar{\varphi} \in  \mathcal{M}_1(A)\backslash \{ \pi_* \}$ such that equality holds in (i), then
\begin{equation*}
   \text{(ii)} \hspace{0.5cm} \int_A \mathcal{U}(x,e(\Pi_{(\bar{\varphi}, \pi_*)}^{P_{\varepsilon}}))\pi_*(dx) > \int_A \mathcal{U}(x,e(\Pi_{(\bar{\varphi}, \pi_*)}^{P_{\varepsilon}}))\bar{\varphi}(dx).
\end{equation*}
for all $p \in P_{\varepsilon}$.
\label{definition1}
\end{definition}
\noindent The intuition of the above definition is the same as in the $N$-player case, namely that; playing the strategy $\pi$ in a population where all but a small $\varepsilon$-fraction plays some other strategy $\varphi$ should be strictly superior and else, if equally good, playing $\pi$ should do better even if an $(1-\varepsilon)$-fraction of the population plays $\varphi$. Using the terminology of adaptive dynamics, the strategy $\pi$ can invade other populations, but can never be invaded itself.\\
\indent Turning back to the static limit model of the war of attrition, we will now fit it into the abstract framework above. For for a continuum of identical players it is suitable to consider $\mathfrak{P} = ([0,1], \mathcal{B}([0,1]), m)$ where $m$ is the Lebesgue measure. Since all players chose a positive waiting time the action space is $\mathfrak{A} = (\mathbb{R}_+, \mathcal{B}(\mathbb{R}_+))$, with the usual euclidian metric. For the payoff function of a static WA with a continuum of players and increasing prize function $V \in \mathcal{C}^2[0,1]$ things turn out simplified compared to the finite model (see \cite{HaigCannings}, pp. 69). We restrict ourselves to consider mixed strategies in $\mathcal{M}_1^0(\mathbb{R}_+) \subset \mathcal{M}_1(\mathbb{R}_+)$, where $\mathcal{M}_1^0(\mathbb{R}_+)$ is the set of probability measures on $\mathbb{R}_+$ such that $d \mathbb{P}(t) = \dot{\pi}(t)dt$ and $\dot{\pi}(t) \in \mathcal{C}(\mathbb{R}_+)$. Consider a finite partition $\mathcal{Z} = \{ Z_{\varepsilon_i} \}_{i=1}^n$ of $[0,1]$ for which $m(Z_{\varepsilon_i}) = \varepsilon_i$. Let $\mathcal{A}(\mathcal{Z}) = \{ \pi_i(t)dt \}_{i=1}^n \subset \mathcal{M}_1^0(\mathbb{R}_+)$ and assume that all players in $Z_{\varepsilon_i}$ are playing $\pi_i(t)dt$ for all $i = 1, ..., n$. If $\Delta_{\mathcal{A}(\mathcal{Z})}: [0,1] \rightarrow \mathcal{M}_1^0(\mathcal{R}_+)$ is the mixed action profile corresponding to the pair $(\mathcal{Z},\mathcal{A}(\mathcal{Z}))$ we introduce the \textit{average} of $\Delta_{\mathcal{A}(\mathcal{Z})}$ as the measure
\begin{equation*}
    \bar{\Delta}_{\mathcal{A}(\mathcal{Z})}(dt) := \int_0^1 \Delta_{\mathcal{A}(\mathcal{Z})}(p) dp = \sum_{i=1}^N \varepsilon_i \pi_i(t)dt.
\end{equation*}
Note that in an $N$-player game constructed so that $P = \{ 1/N, 2/N, ..., 1 \}$, $\mathcal{P} = 2^P$ and $\mu(\{ i/N \}) = 1/N$, for $i = 1, 2, ..., N$, and in which we consider a mixed action profile $\Delta(i/N) = \pi_i \in \mathcal{M}_1^0(\mathbb{R}_+)$, the average of $\Delta$ is nothing but the mixed distribution $(\sum_{i=1}^N \pi_i)/N$. Now, considering the behaviour of $V(t)$ as players are leaving in the static WA with infinitely many players it is by continuity clear that a single player quitting will not contribute the time evolution. For $V$ to evolve at some point of time $t$, i.e. have $V'(t) > 0$, requires a positive density of players quitting at $t$. Since $V$ is differentiable by assumption all the players quitting at $t$ will collect the same prize $V(t)-t$ and therefore the dependence of order among players locally vanishes in the limit. We conclude that, given some partition $\mathcal{Z}$ and a corresponding set of mixed strategies $\mathcal{A}(\mathcal{Z})$, the payoff function of the static model WA in the continuum limit of infinitely many players can be written on interior-exterior form as
\begin{equation}
    \mathcal{J} \left(p | \Delta_{\mathcal{A}(\mathcal{Z})} \right) := \int_0^{\infty} \left[ V \left(\int_0^t \bar{\Delta}_{\mathcal{A}(\mathcal{Z})}(dx) \right) - t \right] \Delta(p)(dt).
\end{equation}
Thus, in this case the space of profile statistics is given by the space of cdf's, i.e. $Y = \{ f: f(t) = \int_0^t \nu(dx) \text{ for all } t \in \mathbb{R}_+ \text{ and some } \nu \in \mathcal{M}_1(\mathbb{R}_+) \} $, and the mixed externality is given by the map $e: \Delta_{\mathcal{A}(\mathcal{Z})} \mapsto \int_0^t\bar{\Delta}_{\mathcal{A}(\mathcal{Z})}(dx)$.\\
\indent We are now ready to start the ESS-analysis of playing $\dot{q}(t)$ in the static limit model of the WA. Considering condition (i) in Definition \ref{definition1} with the mixed action profile $\Pi_{(\dot{q}, \dot{\varphi})}^{P_{\varepsilon}}$, hence the mixed externality $e \left( \Pi_{(\dot{q}, \dot{\varphi})}^{P_{\varepsilon}} \right) = \varepsilon \varphi + (1-\varepsilon) q$, and using a Taylor expansion followed a partial integration we get that
\begin{eqnarray}
    \nonumber &&\int_0^{\infty} \mathcal{U}(t,e(\Pi_{(\dot{q}, \dot{\varphi})}^{\varepsilon}))\dot{q}(t)dt - \int_0^{\infty} \mathcal{U}(t,e(\Pi_{(\dot{q}, \dot{\varphi})}^{\varepsilon}))\dot{\varphi}(t)dt =\\
    \nonumber &&\int_0^{\infty} \left( \dot{q} - \dot{\varphi} \right) \left[ V\left( \varepsilon \varphi + (1-\varepsilon) q \right) - t \right] dt =\\
    \nonumber &&\int_0^{\infty} \left( \dot{q} - \dot{\varphi} \right) \left[ V(q) + V'(q) \left( \varepsilon \varphi -\varepsilon q \right) - t \right] dt + \mathcal{O}\left( \varepsilon ^2 \right)=\\
    \nonumber &&\varepsilon \int_0^{\infty} \left( \dot{q} - \dot{\varphi} \right) V'(q) \left( \varphi - q \right) dt + \mathcal{O}\left( \varepsilon ^2 \right)=\\
    \nonumber && -\frac{\varepsilon}{2} \int_0^{\infty} \frac{d}{dt}\left( \varphi - q \right)^2 \cdot  V'(q) dt + \mathcal{O}\left( \varepsilon ^2 \right)=\\
    &&\frac{\varepsilon}{2} \int_0^{\infty} \left( \varphi - q \right)^2  \dot{q} V''(q) dt + \mathcal{O}\left( \varepsilon ^2 \right),
    \label{difference}
\end{eqnarray}
where we have used that $V(q(t)) - t = 0$ in the third equality. Now, choosing $\varepsilon$ small enough, so that the ordo term can be neglected, we reach the conclusion that wether the first condition in Definition \ref{definition1} is fulfilled or not is depends on the geometry of the graph of $V(x)$. Since $\dot{q}(t)$ is a probability density, and by that positive, what determines the sign of the difference in payoff is the second derivative of the prize function. If $V(x)$ is convex the strategy of choosing waiting time according to $\dot{q}(t)dt$ will always constitute an ESS whereas, if instead $V(x)$ is a concave function, the $\dot{q}$-strategy can be invaded by any other strategy in $\mathcal{M}_1^0(\mathbb{R}_+)$. For a linear model in which $V(x) = kx$ for some $k > 0$, it follows by an easy calculation that there is equality in both $(i)$ and $(ii)$ in Definition \ref{definition1}. If $V(x)$ is neither convex nor concave the strategy $\dot{q}(t)dt$ is not an ESS since (\ref{difference}) could be made both positive and negative by choosing $\dot{\varphi}(t)$ carefully. In particular, if $A_- := \{ t: V''(q(t)) < 0 \}$ and $A_+ := \mathbb{R_+} \backslash A_-$, choosing $\dot{\varphi}$ so that $\varphi \equiv q$ on $A_+$ and $(\varphi(t) - q(t))^2 > 0$ on a subset of $A_-$ of positive Lebesgue measure yields a negative value in (\ref{difference}).\\
\indent According to \cite{HaigCannings}, in the static $N$-player model, a sufficient (second order variational) condition for a cdf $G_N(t)$, solving (\ref{autonomous}), to be an ESS is to have positivity in the functional
\begin{equation}
    \Upsilon_N \left[ \alpha \right] := \int_0^{\infty} Q\left[ G_N \right](t) \cdot \alpha ^2(t) dt
    \label{functional}
\end{equation}
for all functions $\alpha (t)$, i.e. $Q\left[ G_N \right](t) > 0$ , where
\begin{equation}
    Q\left[ G_N \right] = 2G_N^{N-2} + \frac{d}{dt} \left\{ \sum_{r=0}^{N-2} c_r \binom{N-2}{r} G_N^r (1-G_N)^{N-2-r} \right\}
\end{equation}
for all $t \geq 0$. In the case of a linear model, in which the sequence $\left\{ V_k \right\}_{k=1}^N$ is an increasing arithmetic progression, positivity is proven in \cite{HaigCannings}. For more general sequences the problem is left open. However, the conclusions we made from the calculations in (\ref{difference}) suggest that the positivity of $Q$ might also be true for large $N$ when the prize sequence is convex, i.e. so that the sequence of consecutive differences $\left\{ c_r \right\}_{r=0}^{N-2}$ is increasing. The next theorem implies that this is true, not only asymptotically, but for all $N \geq 2$.
\begin{theorem}
    Let $\left\{ V_k \right\}_{k=1}^N \subset \mathbb{R}_+$ be an increasing sequence such that the sequence of consecutive differences $\left\{ c_r \right\}_{r=0}^{N-2}$ is non-decreasing. Then, if $G_N(t)$ is the cdf solving (\ref{autonomous}), it holds that $Q\left[ G_N \right](t) \geq 0$ for all $t \geq 0$ and any $N \geq 2$. On the other hand, if $\left\{ c_r \right\}_{r=0}^{N-2}$ is decreasing, then $Q\left[ G_N \right](t)$ takes negative values in a set of positive Lebesgue-measure. Finally, $\lim_{t \rightarrow \infty} Q\left[ G_N \right] (t) = 1$ independently of the sequence of consecutive differences.
    \label{theorem3}
\end{theorem}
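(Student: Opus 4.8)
The plan is to express $Q[G_N]$ as a function of $G_N$ alone along the trajectory and to read off its sign from the monotonicity of $\{c_r\}$. Write $D(G) := \sum_{r=0}^{N-2} c_r \binom{N-2}{r} G^r (1-G)^{N-2-r}$, so that $Q[G_N] = 2G_N^{N-2} + \frac{d}{dt}D(G_N) = 2G_N^{N-2} + D'(G_N)\dot{G}_N$. Substituting the defining equation (\ref{autonomous}), $\dot{G}_N = (1-G_N^{N-1})/((N-1)D(G_N))$, and noting that $D(G)$ is a convex combination of the positive numbers $c_r$ and hence satisfies $D(G) > 0$ on $[0,1]$ while $1-G_N^{N-1} > 0$ on $[0,1)$, one sees that $\dot{G}_N > 0$ throughout $[0,1)$. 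Consequently the sign of the derivative term is exactly the sign of $D'(G_N)$, and since $2G_N^{N-2}\ge 0$ the whole problem reduces to controlling the sign of $D'$.

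The key identity is the Bernstein derivative formula: recognizing $D$ as the Bernstein polynomial of degree $N-2$ with control values $c_0,\dots,c_{N-2}$, one has $D'(G) = (N-2)\sum_{r=0}^{N-3}(c_{r+1}-c_r)\binom{N-3}{r}G^r(1-G)^{N-3-r}$. If $\{c_r\}$ is non-decreasing then every coefficient $c_{r+1}-c_r$ is $\ge 0$, so $D'(G)\ge 0$ on all of $[0,1]$; combined with $2G_N^{N-2}\ge 0$ and $\dot{G}_N > 0$ this gives $Q[G_N](t)\ge 0$ for every $t\ge 0$ and every $N\ge 2$ (the case $N=2$ being trivial, $D\equiv c_0$ and $Q\equiv 2$). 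This settles the first assertion.

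For the second assertion, assume $\{c_r\}$ is strictly decreasing, so that $c_{r+1}-c_r<0$ and hence $D'(G)<0$ on $[0,1)$. I would evaluate $Q$ at $t=0$, where $G_N(0)=0$: for $N\ge 3$ the term $2G_N^{N-2}$ vanishes, only $r=0$ survives in $D'(0)=(N-2)(c_1-c_0)$, and $\dot{G}_N(0)=1/((N-1)c_0)$, giving $Q[G_N](0)=(N-2)(c_1-c_0)/((N-1)c_0)<0$. Since $G_N$ is $\mathcal{C}^1$ and $D,D'$ are polynomials, $t\mapsto Q[G_N](t)$ is continuous, so $Q<0$ on a whole interval $[0,\eta)$, a set of positive Lebesgue measure.

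Finally, for the limit $t\to\infty$ one uses that $G_N(t)\to 1$ (the asymptotically stable equilibrium of (\ref{autonomous})). The decisive observation is that all dependence on $\{c_r\}$ is carried by the derivative term $\frac{d}{dt}D(G_N)=D'(G_N)(1-G_N^{N-1})/((N-1)D(G_N))$, and this vanishes as $G_N\to 1$ because $1-G_N^{N-1}\to 0$ while $D(1)=c_{N-2}>0$ and $D'(1)$ is finite; what remains is the $c_r$-free monomial term, and an elementary evaluation of $\lim_{G\to1^-}2G^{N-2}$ then gives the stated value, manifestly independent of the sequence of differences. I expect the main obstacle to be conceptual rather than technical, namely the identification of $D$ with a Bernstein polynomial and the use of its derivative formula, which is precisely what converts the hypothesis on the monotonicity of $\{c_r\}$ into a clean sign statement for $D'$; once this bridge is in place, all three assertions follow from the reduction above together with the continuity and limit arguments just sketched.
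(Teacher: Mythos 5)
Your proof is correct and takes essentially the same route as the paper: your Bernstein derivative identity $D'(G)=(N-2)\sum_{r=0}^{N-3}(c_{r+1}-c_r)\binom{N-3}{r}G^r(1-G)^{N-3-r}$ is exactly the paper's regrouped expression (since $\binom{N-2}{r+1}(r+1)=\binom{N-2}{r}(N-2-r)=(N-2)\binom{N-3}{r}$), and the evaluation at $t=0$ and the $t\to\infty$ argument match the paper's step for step. One caveat: your closing computation actually yields $\lim_{G\to 1^-}2G^{N-2}=2$, not the stated value $1$, but this discrepancy is inherited from the paper itself, whose definition of $Q$ likewise forces the limit $2$, so it is a typo in the theorem statement rather than a gap in your argument.
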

\begin{proof}
Consider the second term in the definition of $Q\left[ G_N \right]$:
\begin{align*}
    &\frac{d}{dt} \left\{ \sum_{r=0}^{N-2} c_r \binom{N-2}{r} G_N^r (1 - G_N)^{N-2-r} \right\} =\\
    &= g_N \left( N-2 \right) \left( c_{N-2} G_N^{N-3} - c_0 \left( 1-G_N \right)^{N-3} \right) +\\
    &\sum_{r=1}^{N-3} g_N c_r \binom{N-2}{r} \left[ r G_N^{r-1} \left( 1-G_N \right)^{N-2-r} - \left( N-2-r \right) G_N^r \left( 1-G_N \right)^{N-3-r} \right]\\
    &= \sum_{r=0}^{N-3} g_N G_N^r \left( 1-G_N \right)^{N-3-r} \left[ c_{r+1} \binom{N-2}{r+1} (r+1) - c_r \binom{N-2}{r} (N-2-r) \right].
\end{align*}
Each term in the sum above is positive if and only if
\begin{equation}
    \frac{c_r}{c_{r+1}} \leq \frac{\binom{N-2}{r+1} (r+1)}{\binom{N-2}{r} (N-2-r)} = 1,
    \label{increasing_cr}
\end{equation}
and since $\left\{ c_r \right\}_{r=0}^{N-2}$ is a growing sequence by assumption we get that $Q\left[ G_N \right]$ is positive for all $t \geq 0$ and all $N \geq 2$.\\
\indent If $\left\{ c_r \right\}_{r=0}^{N-3}$ is decreasing we have especially that $c_0/c_1 > 1$ and the condition in (\ref{increasing_cr}) is broken for this pair. It is then easy to check that $Q\left[ G_N \right](0) < 0$ and, because of continuity, it is therefore also negative in some neighborhood of $t = 0$.\\
\indent The last claim in the theorem follows easily from the definition of $Q$, using the explicit formula above for the term with derivative, and that $\lim_{t \rightarrow \infty} G_N(t) = 1$ and $\lim_{t \rightarrow \infty} g_N(t) = 0$.
\end{proof}
\noindent Thus, by Theorem \ref{theorem3} we get the important corollary:
\begin{corollary}
Let $\left\{ V_k \right\}_{k=1}^{N}$ be a positive and increasing sequence of numbers such that the differences $c_r = V_{r+2} - V_{r+1}$ forms a positive and increasing sequence. Further, let $G_N$ be the unique cdf solving the problem in (\ref{autonomous}) and let $g_N(t) := dG_N/dt (t)$. Then, choosing a waiting time according to the probability density $g_N(t)$ in an $N$-player static model of war of attrition, with the prize sequence $\left\{ V_k \right\}_{k=1}^{N}$, is a unique ESS.
\label{corollary1}
\end{corollary}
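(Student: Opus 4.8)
The plan is to read the result off Theorem~\ref{theorem3} together with the second-order variational criterion of \cite{HaigCannings} recorded around (\ref{functional}). First I would isolate the role of the two clauses in the $N$-player ESS definition. Because $G_N$ is by construction the solution of (\ref{autonomous}), the derivative $\tfrac{d}{dx}[\mathcal{J}_N(\delta_x \mid g_N^{\oplus(N-1)})]$ vanishes identically, so the pure-strategy payoff against $g_N^{\oplus(N-1)}$ is constant in $x$. Since $\mathcal{J}_N(\varphi \mid g_N^{\oplus(N-1)})$ is linear (an average over $\delta_x$) in the invading strategy $\varphi$, this forces equality in clause (i) of the ESS definition for every competitor $\varphi$. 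Hence $g_N$ can never win by the strict first inequality, and the whole question of whether $g_N$ is an ESS collapses onto clause (ii), that is, onto the second-order term.

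Next I would invoke the criterion from \cite{HaigCannings}: for a cdf solving (\ref{autonomous}), strict positivity of the quadratic functional $\Upsilon_N[\alpha] = \int_0^\infty Q[G_N](t)\,\alpha^2(t)\,dt$ --- equivalently $Q[G_N](t) > 0$ --- is sufficient for clause (ii), and thus for $g_N$ to be an ESS. Theorem~\ref{theorem3} supplies $Q[G_N] \ge 0$ under a non-decreasing $\{c_r\}$; the point is that under the stronger, strictly increasing hypothesis of the corollary the estimate becomes strict. Indeed, the ratio test (\ref{increasing_cr}) then reads $c_r/c_{r+1} < 1$ for every $r$, so each summand in the explicit expression for the derivative term computed in the proof of Theorem~\ref{theorem3} is strictly positive wherever the weight $g_N(t) G_N^r (1-G_N)^{N-3-r}$ does not vanish. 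Combined with the boundary term $2G_N^{N-2}$ and the behaviour at $t=0$ (where, using $g_N(0) > 0$ from $dG_N/dx(0) > 0$, the $r=0$ contribution is $g_N(0)(N-2)(c_1 - c_0) > 0$), this yields $Q[G_N](t) > 0$ for all $t \ge 0$. Therefore $\Upsilon_N[\alpha] > 0$ for every admissible $\alpha \not\equiv 0$, and $g_N$ is an ESS.

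For uniqueness I would argue that any ESS is in particular a symmetric Nash equilibrium, so the pure-strategy payoff against it must be maximal on its support; requiring that no $\delta_x$ can invade forces indifference on all of $\mathbb{R}_+$, which is exactly the first-order system (\ref{autonomous}). Since (\ref{autonomous}) is an autonomous initial value problem with a unique solution --- as already established from $\Xi_N(1) = 0$, $\Xi_N'(1) < 0$ and $dG_N/dx(0) > 0$ --- the only candidate is $g_N$, which the previous paragraph shows is an ESS; hence it is the unique ESS.

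The two delicate points are the bookkeeping that upgrades the $Q[G_N] \ge 0$ of Theorem~\ref{theorem3} to the strict positivity genuinely demanded by (\ref{functional}), and the justification that any ESS must have full support and therefore solve (\ref{autonomous}) on all of $\mathbb{R}_+$ rather than only on a proper subset. The existence half is essentially immediate once Theorem~\ref{theorem3} is in hand, so I expect the main obstacle to be this full-support argument, which is what makes the first-order equation both necessary and sufficient to pin down the unique candidate.
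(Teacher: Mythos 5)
Your proposal follows exactly the paper's route: the corollary is obtained there directly from Theorem~\ref{theorem3} via the second-order sufficient condition $Q[G_N](t)>0$ of \cite{HaigCannings}, with uniqueness resting on the fact that (\ref{autonomous}) is a necessary condition admitting a unique solution. You in fact supply more detail than the paper does --- the equality in clause (i), the upgrade from $Q[G_N]\geq 0$ to strict positivity under strictly increasing $c_r$, and the full-support step in the uniqueness argument --- all of which the paper leaves implicit.
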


\noindent The conclusion of Corollary \ref{corollary1} is indeed more than what one could have hoped for since the calculation leading to (\ref{difference}) is representative only for a static model WA with a very large number of players.\\
\indent Now, switching focus towards concave prize sequences, i.e. sequences such that $\left\{ c_r \right\}_{r=0}^{N-2}$ is positive and decreasing, things turn out to be a bit more complicated. By Theorem \ref{theorem3} the functional in (\ref{functional}) can take both positive and negative values, making it a useless tool for ESS analysis\footnote{In \cite{HaigCannings}, for an $N$-player static model WA, being the $N$'th player, the functional $\Upsilon_N$ represents the second order condition of the variational problem of finding the extremum of the difference in payoff between playing the strategy $g_N(t)$ compared to any other strategy $h(t)$. Specifically, one considers the case of facing one player using $h(t)$ and the rest using $g_N(t)$.}. On the other hand, (\ref{difference}) indicates that any strategy, when used by a few players, does better against $g_N(t)$ when $N$ is large. In order to investigate this problem closer we consider the problem of playing an $N$-player static model WA in which one opponent is playing $\delta_0$, i.e. the pure strategy of quitting at $t=0$, and where the rest $N-2$ opponents play $g_N(t)$. If the expected payoff of quitting immediately is greater than the expected payoff of playing $g_N(t)$ the latter is not an ESS. Here it is important to consider a situation in which at least one of the $N-1$ opponents play a strategy different from $g_N(t)$ since if not, recalling that $g_N(t)$ solves (\ref{autonomous}), the expected payoff of playing any other strategy would be zero. According to \cite{HaigCannings} the expected payoff of playing $g_N(t)$ against one $\delta_0$-player and $N-2$ other $g_N$-players is given by the expression
\begin{align}
    \nonumber &\mathcal{J}_N\left( g_N| g_N^{\oplus(N-2)}, \delta_0 \right) =\\
    \nonumber &\int_0^{\infty} g_N(t) \left[ \sum_{r=0}^{N-2} \left( V_{r+2}-t \right) \binom{N-2}{r}G_N^r(t)(1-G_N(t))^{N-2-r} + \int_0^t G_N^{N-2}(x) dx \right] dt\\
    \nonumber &= \int_0^{\infty} g_N(t) \sum_{r=0}^{N-2} V_{r+2} \binom{N-2}{r}G_N^r(t)(1-G_N(t))^{N-2-r} dt -\\
    &\hspace{1cm}- \int_0^{\infty} \left( 1 - G_N(t) \right) \left( 1 - G_N(t)^{N-2} \right) dt,
    \label{payoff}
\end{align}
where we used partial integration and that $\int_0^t G_N^{N-2}(x)dx - t = \int_0^t \left( G_N^{N-2}(x) - 1 \right) dx$. If instead of playing $g_N(t)$ we were to play $\delta_0$, still facing the same opponents, the expected payoff is
\begin{equation*}
    \mathcal{J}_N\left( \delta_0| g_N^{\oplus(N-2)}, \delta_0 \right) = \frac{V_1 + V_2}{2}.
\end{equation*}
We denote the difference between these two numbers by $\Delta_N^{\delta_0}$. Corollary \ref{corollary1} guarantees that $\Delta_N^{\delta_0} > 0$ for all $N \geq 3$ as long as the given prize sequence is convex. By (\ref{difference}) it is natural to investigate to what extent the opposite holds true in the concave case. We start by analyzing the first term in (\ref{payoff}), assuming that $N \geq 4$. By partial integration twice, using the derivative formula from the proof of Theorem \ref{theorem3}, one finds that
\begin{align*}
    &\int_0^{\infty} g_N \sum_{r=0}^{N-2} V_{r+2} \binom{N-2}{r}G_N^r(1-G_N)^{N-2-r} dt\\
    &= V_N - \frac{N-2}{2} \left( V_N - V_{N-1} \right) +\\
    &\hspace{0.5cm}\int_0^{\infty} g_N\frac{G_N^2}{2} \sum_{r=0}^{N-4}G_N^r(1 - G_N)^{N-4-r} \binom{N-4}{r}(N-3)(N-2)(c_{r+2}-c_{r+1})dt.
\end{align*}
Expanding the binomial in the expression above, using the binomial theorem, the integral becomes explicitly solvable and one finds that
\begin{align*}
    &\int_0^{\infty} g_N \sum_{r=0}^{N-2} V_{r+2} \binom{N-2}{r}G_N^r(1-G_N)^{N-2-r} dt\\
    &= V_N - \frac{N-2}{2} \left( V_N - V_{N-1} \right) +\\
    &\hspace{0.5cm}+ \frac{(N-2)(N-3)}{2} \sum_{r=0}^{N-4}(c_{r+2}-c_{r+1}) \binom{N-4}{r} \sum_{k=0}^{N-4-r}\binom{N-4-r}{k}\frac{(-1)^k}{3+r+k}.
\end{align*}
Now, from \cite{Garrappa} we have the relation
\begin{equation}
    \sum_{k=0}^q \binom{q}{k}\frac{(-1)^k}{k-p} = \frac{q! \Gamma(-p)}{\Gamma(q+1-p)},
    \label{gammarel}
\end{equation}
for any $p \in \mathbb{R} \backslash \{ 0, 1, 2, ... \}$ and any positive integer $q$. Using (\ref{gammarel}) for simplifying the inner sum above we end up with the expression
\begin{align}
    \nonumber \Delta_N^{\delta_0} &= V_N - \frac{N-2}{2} \left( V_N - V_{N-1} \right) + \sum_{r=0}^{N-4} \frac{(r+1)(r+2)}{2(N-1)} \left( c_{r+2} - c_{r+1} \right) - \\
    \nonumber &\hspace{1cm}- \int_0^{\infty} \left( 1 - G_N \right) \left( 1 - G_N^{N-2} \right) dt - \frac{V_1 + V_2}{2}\\
    \nonumber &= \left[ V_N - \frac{N-2}{2} \left( V_N - V_{N-1} \right) + \sum_{r=0}^{N-4} \frac{r^2}{2(N-1)} \left( c_{r+2} - c_{r+1} \right) -\right. \\
    \nonumber &\hspace{1cm}- \left.\int_0^{\infty} \left( 1 - G_N \right) \left( 1 - G_N^{N-2} \right) dt + \sum_{r=0}^{N-4}\frac{3r + 2}{2(N-1)}\left( c_{r+2} - c_{r+1} \right) \right]-\\
    &\hspace{1cm}- \frac{V_1 + V_2}{2} =: A_N + C_N
    \label{ANBN}
\end{align}
While it is not easy to determine the sign of this expression for an arbitrarily finite $N$, it is possible to do this for $N$ sufficiently large. Like in previous the sections we assume that there is a fixed $\mathcal{C}^2[0,1]$-function, $V(x)$, such that $V(0) = 0$ and $V_k := V(k/N)$. Beginning with $A_N$, by Proposition \ref{proposition1} and the monotone convergence theorem we get that
\begin{align*}
    \lim_{N \rightarrow \infty} A_N = V(1) - \frac{V'(1)}{2} + \int_0^1 \frac{x^2}{2}V''(x)dx - \int_0^{V(1)} \left( 1 - V^{-1}\left( \frac{x}{V(1)} \right) \right) dx.
\end{align*}
By partial integration twice in the first integral above, and observing that
\begin{equation*}
    \int_0^{V(1)} V^{-1}\left( \frac{x}{V(1)} \right) dx = V(1) - \int_0^1 V(x)dx,
\end{equation*}
one readily finds that $\lim_{N \rightarrow \infty} A_N = 0$, independently of the sign of $V''(x)$. It is easy to see that $\lim_{N \rightarrow \infty} C_N = 0$ so that $\Delta_{\infty}^{\delta_0} = 0$. Thus, at infinity the expected payoffs are equal which is natural since the fraction of $\delta_0$-players then will be zero. What is more interesting, however, is to consider the rates of convergence in the three terms above. Let us consider prize functions of the type $V(x) = x^{\alpha}$, $\alpha \in \mathbb{R}_+$. A careful investigation of each of the terms in $A_N$ shows that both the sum and the integral converge like $\mathcal{O}(1/N)$. Hence $A_N$ converges to zero like $\mathcal{O}(1/N)$. A proof of this claim is given in Appendix \ref{appendix2}. Now, what is interesting is that the special form of $V(x)$ makes $C_N$ tends to zero like $\mathcal{O}(1/N^{\alpha})$, i.e. slower than $A_N$. Thus we conclude that $\Delta_{\infty}^{\delta_0}$ tends to zero and that $\Delta_{N}^{\delta_0} < 0$ for all $N$ large enough. We formulate this result in a theorem:
\begin{theorem}
    Let $0 < \alpha < 1$ and consider an $N$-player static war of attrition with prize sequence $\{ V_k \}_{k=1}^N$, where $V_k := \left( k/N \right) ^{\alpha}$. Then there exist an $N^* \in \mathbb{N}$ such that for all $N \geq N^*$ the strategy of choosing waiting time according to the cdf $G_N(t)$, solving \ref{autonomous}, is not an ESS and the game therefore lacks ESS strategies if the number of players exceed $N^*$.
    \label{theorem4}
\end{theorem}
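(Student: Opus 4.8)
The plan is to establish the single inequality $\Delta_N^{\delta_0} < 0$ for all large $N$ and to read non-existence of an ESS off from it. First I would record the ESS reduction. Because $g_N$ solves the first-order condition~(\ref{autonomous}), the expected payoff of \emph{any} pure strategy $\delta_x$ against $g_N^{\oplus(N-1)}$ is constant in $x$, and hence so is the payoff of any mixed strategy; thus condition~(i) in Palm's $N$-player ESS definition holds only with equality for every competitor $\bar\varphi$. Consequently $g_N$ can be an ESS only if the second-order condition~(ii) holds for all such $\bar\varphi$. Testing it against the single invader $\bar\varphi = \delta_0$ turns~(ii) into $\mathcal{J}_N(g_N|g_N^{\oplus(N-2)},\delta_0) > \mathcal{J}_N(\delta_0|g_N^{\oplus(N-2)},\delta_0)$, i.e. exactly $\Delta_N^{\delta_0} > 0$. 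Hence $\Delta_N^{\delta_0} < 0$ forces condition~(ii) to fail and $g_N$ is not an ESS.

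It therefore remains to prove $\Delta_N^{\delta_0}<0$ for large $N$, for which I would use the splitting $\Delta_N^{\delta_0} = A_N + C_N$ of~(\ref{ANBN}) together with the already-established fact that both terms vanish in the limit. For $V(x) = x^\alpha$ the correction term is completely explicit,
\[
    C_N = -\frac{V_1+V_2}{2} = -\frac{1+2^\alpha}{2}\,N^{-\alpha},
\]
so it is strictly negative and of exact order $N^{-\alpha}$. The remaining quantity $A_N$ --- the two second-difference sums together with the integral $\int_0^\infty (1-G_N)(1-G_N^{N-2})\,dt$ --- converges to zero at the faster rate $\mathcal{O}(N^{-1})$, which is the estimate I would defer to Appendix~\ref{appendix2}. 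Since $0<\alpha<1$ gives $N^{-\alpha}/N^{-1} = N^{1-\alpha} \to \infty$, the term $C_N$ dominates and dictates the sign: $\Delta_N^{\delta_0} = C_N\bigl(1+o(1)\bigr) < 0$ for every $N$ exceeding some threshold $N^*$. This proves that $g_N$ fails to be an ESS for $N \ge N^*$.

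To pass from ``$g_N$ is not an ESS'' to ``the game admits no ESS'' I would invoke the uniqueness from Section~4: the necessary equalising condition~(\ref{autonomous}) has, for any increasing prize sequence, a unique solution with $G_N(0)=0$, namely the cdf of $g_N$, so $g_N$ is the only candidate for an ESS. As the sole candidate fails the invasion test by $\delta_0$, no ESS exists once $N \ge N^*$.

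The genuinely delicate step is the rate $A_N = \mathcal{O}(N^{-1})$. Each of the three constituents of $A_N$ tends to a \emph{nonzero} limit and these limits cancel exactly (yielding $\lim_N A_N = 0$), so one cannot stop at leading order but must extract the next term. This calls for quantitative control of $G_N \to V^{-1}$: the Gr\"{o}nwall bound~(\ref{Gronwall}) is only pointwise and carries an $x$-dependent exponential factor, so propagating it into the integral $\int_0^\infty(1-G_N)(1-G_N^{N-2})\,dt$ --- which develops a boundary layer near $t=V(1)$ where the factor $G_N^{N-2}$ collapses --- and handling the sums by an Euler--Maclaurin-type expansion is where the real effort sits. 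I expect this, rather than the short ESS argument above, to be the main obstacle, and I would isolate it in the appendix and feed only its conclusion into the main proof.
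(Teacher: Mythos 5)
Your proposal follows essentially the same route as the paper: the reduction to the single invasion test $\Delta_N^{\delta_0}>0$ against one $\delta_0$-player (noting that the equalising property of $g_N$ forces Palm's condition (i) to hold with equality, so (ii) is decisive), the decomposition $\Delta_N^{\delta_0}=A_N+C_N$ from (\ref{ANBN}) with $C_N=-(V_1+V_2)/2\sim -\tfrac{1+2^{\alpha}}{2}N^{-\alpha}$ dominating the $\mathcal{O}(1/N)$ term $A_N$ when $0<\alpha<1$, and the uniqueness of $G_N$ as the only ESS candidate. You also correctly locate the real technical burden in the quantitative estimate $A_N=\mathcal{O}(1/N)$, which the paper likewise isolates in Appendix \ref{appendix2}.
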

\noindent Below we have included some numerical results illustrating Theorem \ref{theorem4}.
\begin{figure}[h!]
    \begin{center}
        \scalebox{0.45}{\includegraphics[trim=2.9cm 0cm 0cm 0cm]{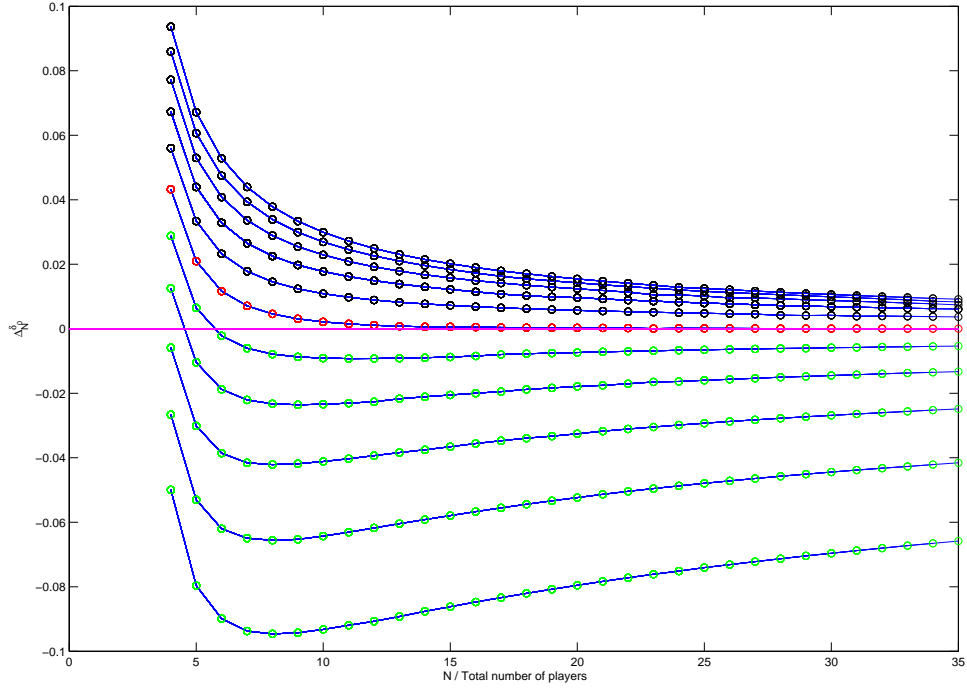}}
        \caption{Numerical results for the value of $\Delta_N^{\delta_0}$ in the range $N = 4, ..., 35$ where the prize sequence was recovered from the function $V(x) = x^{\alpha}$. The green curves, starting from below, correspond to $\alpha = 0.5, 0.6, 0.7, 0.8, 0.9$, the red curve has $\alpha = 1$, and the black curves has $\alpha = 1.1, 1.2, 1.3, 1.4, 1.5$.}
    \end{center}
\end{figure}
By \cite{HaigCannings} we note that $\Delta_N^{\delta_0} > 0$ for all $N \geq 3$ in the special case when $\alpha = 1$. It is therefore possible to push the number $N^*$ towards infinity by choosing an $\varepsilon > 0$ small enough and consider $\alpha = 1 - \varepsilon$.\\
\indent It is interesting to note that even though the time evolution of the dynamic and the static model behave the same when $N$ is large, and that Theorem \ref{theorem2} indicates that the dynamic model in a sense is static in the limit, the features of the optimal strategies differ a lot between the models. In the dynamic model the unique ESS of playing the exponential distribution with mean $(N-k)(V_{k+1} - V_k)$ in round $k$ tends to the "quasi static" strategy of playing $\dot{q}(t) = d/dt(V^{-1}(t/V(1)))$. For the static model the same result holds true by Proposition \ref{proposition1}, but if $V(x)$ is concave the limit is being reached from a sequence of non-ESS's. Thus the fundamental difference between the $N$-player dynamic model, being a repetitive game, and the $N$-player static model, being a one-shot game, is resolved only at $N = \infty$.\\
\indent For prize sequences extracted from increasing prize functions that are neither convex nor concave no definite results have been found in the $N$-player case. However, like in the concave case the calculation in (\ref{difference}) strongly indicate that a static WA with such a prize sequence lacks an ESS if $N$ is large enough.

\section*{Acknowledgements}
    The authors would kindly like to thank Torbj\"{o}rn Lundh and Philip Gerlee for introducing us to the war of attrition and for suggesting \cite{HaigCannings} as the main reference for this work.

\appendix
\section{Finding the matrix $\textbf{P}(t)$}
\label{appendix1}
In this appendix we consider the problem of finding the matrix $\textbf{P}(t)$ from Section \ref{section3}, equation (\ref{chapmankolomogorov}), in the main article. For this we consider the matrix of intensities
\begin{equation*}
    \textbf{Q} :=
    \begin{pmatrix}
        -\lambda_1 & \lambda_1 & 0 & \cdots & \cdots & 0 \\
        0 & -\lambda_2 & \lambda_2 & 0 & \cdots & 0 \\
        \vdots & \vdots & \vdots & \ddots & \ddots & \vdots \\
        0 & 0 & 0 & \cdots & -\lambda_{N-1} & \lambda_{N-1} \\
        0 & 0 & 0 & 0 & \cdots & 0
    \end{pmatrix}
\end{equation*}
where $\lambda_k \in \R_+$ for all $k = 1, ..., N-1$. The matrix $\textbf{P}(t)$ relates to $\textbf{Q}$ via the Chapman-Kolomogorov equation
\begin{equation*}
    \left\{
    \begin{array}{l}
        \frac{d\textbf{P}}{dt}(t) = \textbf{Q} \cdot \textbf{P}(t), \hspace{0.5cm} t \geq 0 \\
        \textbf{P}(0) = I.
    \end{array}
    \right.
\end{equation*}
Because of the simple bidiagonal structure of $\textbf{Q}$ it is easy to check that the eigenvalues $\left \{ \Lambda_k \right\}_{k=1}^N$ are given by $\Lambda_k = -\lambda_k$. Let $\textbf{v}_k \in \R^N$ be the right eigenvector corresponding to $\Lambda_k$ and let $\textbf{V} := \left[ \textbf{v}_1, \textbf{v}_2, ..., \textbf{v}_k \right]$. Then, investigating the eigenvector equations one by one yields
\begin{equation*}
    \textbf{V} =
    \begin{pmatrix}
        1 & \frac{\lambda_1}{\lambda_1 - \lambda_2} & \frac{\lambda_1\lambda_2}{(\lambda_1 - \lambda_3)(\lambda_2 - \lambda_3)} & \cdots & \frac{\prod_{k=1}^{N-1} \lambda_k}{\prod_{k=1}^{N-1}(\lambda_k - \lambda_N)}\\
        0 & 1 & \frac{\lambda_2}{\lambda_2 - \lambda_3} & \cdots & \frac{\prod_{k=2}^{N-1} \lambda_k}{\prod_{k=2}^{N-1}(\lambda_k - \lambda_N)}\\
        0 & 0 & 1 & \cdots & \frac{\prod_{k=3}^{N-1} \lambda_k}{\prod_{k=3}^{N-1}(\lambda_k - \lambda_N)}\\
        \vdots & \vdots & \vdots & \ddots & \vdots\\
        0 & 0 & 0 & \cdots & 1
    \end{pmatrix},
\end{equation*}
which is an invertible $N \times N$-matrix with inverse
\begin{equation*}
    \textbf{V}^{-1} =
    \begin{pmatrix}
        1 & -\frac{\lambda_1}{\lambda_1 - \lambda_2} & \frac{\lambda_1\lambda_2}{(\lambda_1 - \lambda_2)(\lambda_1 - \lambda_3)} & \cdots & (-1)^{N+1} \frac{\prod_{k=1}^{N-1} \lambda_k}{\prod_{k=1}^{N-1}(\lambda_1 - \lambda_{k+1})}\\
        0 & 1 & -\frac{\lambda_2}{\lambda_2 - \lambda_3} & \cdots & (-1)^{N+2}\frac{\prod_{k=2}^{N-1} \lambda_k}{\prod_{k=2}^{N-1}(\lambda_2 - \lambda_{k+1})}\\
        0 & 0 & 1 & \cdots & (-1)^{N+3}\frac{\prod_{k=3}^{N-1} \lambda_k}{\prod_{k=3}^{N-1}(\lambda_3 - \lambda_{k+1})}\\
        \vdots & \vdots & \vdots & \ddots & \vdots\\
        0 & 0 & 0 & \cdots & 1
    \end{pmatrix}.
\end{equation*}
Thus $\textbf{V}^{-1} \cdot \textbf{Q} \cdot \textbf{V} = \textbf{D}$, where $\textbf{D}$ is the diagonal matrix with the eigenvalues $\left \{ \Lambda_k \right\}_{k=1}^N$ on the diagonal. Defining $\textbf{A}(t) := \textbf{V}^{-1} \cdot \textbf{P}(t)$ we get the new problem
\begin{equation*}
    \left\{
    \begin{array}{l}
        \frac{d\textbf{A}}{dt}(t) = \textbf{D} \cdot \textbf{A}(t), \hspace{0.5cm} t \geq 0 \\
        \textbf{A}(0) = \textbf{V}^{-1},
    \end{array}
    \right.
\end{equation*}
where $\textbf{D} \in \R^{N \times N}$ is the diagonal matrix with the eigenvalues $\left \{ \Lambda_k \right\}_{k=1}^N$ on the diagonal. By elementary calculus one finds that $\textbf{A}(t) = \left( a_{ij}e^{-\lambda_i t} \right)$ and by the initial condition we get that
\begin{equation*}
    a_{ij} =
    \left\{
    \begin{array}{ll}
       (-1)^{i+j}\frac{\prod_{k=i}^{j-1}\lambda_k}{\prod_{k=i+1}^j (\lambda_i - \lambda_k)} , \hspace{0.5cm} & i < j \\
       1, \hspace{0.5cm} & i = j \\
       0, \hspace{0.5cm} & i > j
    \end{array}
    \right.
\end{equation*}
and since $\textbf{P}(t) = \textbf{V} \cdot \textbf{A}(t)$ we end up with an explicit expression for $\textbf{P}(t)$.

\section{Asymptotics of $A_N$}
\label{appendix2}
In this appendix we prove the claim that the terms in $A_N$ in (\ref{ANBN}) both converge like $\mathcal{O}(1/N)$. Let the underlying prize function be on the form $V(x) = x^{\alpha}$, $0 < \alpha < 1$, so that $V_k := V(k/N)$. For $A_N$ we have
\begin{align*}
    A_N = &V_N - \frac{N-2}{2} \left( V_N - V_{N-1} \right) + \sum_{r=0}^{N-4} \frac{(r+1)(r+2)}{2(N-1)} \left( c_{r+2} - c_{r+1} \right) -\\
     & \hspace{0.5cm} - \int_0^{\infty} \left( 1 - G_N \right) \left( 1 - G_N^{N-2} \right) dt,
\end{align*}
where $V_N = V(1)$ by definition. For the second term, by the mean value theorem and a Taylor approximation, we have
\begin{equation*}
    (V_N - V_{N-1})(N-2) = V'(1) + C_1/N
\end{equation*}
where $C_1$ is a constant depending on $N$ such that $|C_1| \leq \sup_{x \in [\frac{1}{2},1]} |V''(x)|/2$. For the sum in $A_N$ we have that
\begin{align}
    \sum_{r=0}^{N-4} \frac{(r+1)(r+2)}{2(N-1)} \left( c_{r+2} - c_{r+1} \right) &= \sum_{r=1}^{N-3} a_r \left( c_{r+1} - c_{r} \right)
    \label{Sumconv}
\end{align}
 and since 
 \begin{equation}
    a_{N-3}c_{N-2} - a_0c_1 = \sum_{r=1}^{N-3} a_r(c_{r+1}-c_r) + \sum_{r=1}^{N-3} (a_r - a_{r-1})c_r
    \label{trick1}
 \end{equation}
 we get the relation
 \begin{align}
    \sum_{r=0}^{N-4} \frac{(r+1)(r+2)}{2(N-1)} \left( c_{r+2} - c_{r+1} \right) = \frac{(N-3)(N-2)}{2(N-1)}c_{N-2} - \sum_{r=1}^{N-3} \frac{r}{N-1}c_r.
    \label{Sumconv1}
 \end{align}
 The first term in the equality above can be written like $V'(1)/2 + C_2/N$ where the constant $C_2 = \sup_{x \in ((N-1)/N,1)}|V''(x)$|. For the sum in the right hand side of (\ref{Sumconv1}) it holds that
 \begin{align*}
    \sum_{r=1}^{N-3} (a_r - a_{r-1})c_r = \frac{N-3}{N-1} V\left( \frac{N-1}{N} \right) - \sum_{r=1}^{N-3} \frac{1}{N-1}V\left( \frac{r+1}{N} \right)
 \end{align*} 
 where we have used the same trick as in (\ref{trick1}). Thus, collecting all the terms we get the following estimate for the sum in $A_N$
 \begin{align}
    \nonumber \sum_{r=0}^{N-4} \frac{(r+1)(r+2)}{2(N-1)} \left( c_{r+2} - c_{r+1} \right) &= \frac{V'(1)}{2} - V(1) + \sum_{r=1}^{N-3} \frac{1}{N-1}V\left( \frac{r+1}{N} \right) + \frac{C_3}{N}\\
    &\leq \frac{V'(1)}{2} - V(1) + \int_0^1 V(x) dx + \frac{C_3}{N}.
 \end{align} 
 where $C_3$ is uniformly bounded over $N$ and the last inequality follows by the concavity of $V(x)$. For the convergence of the integral term in $A_N$ we recall the following theorem (see e.g. \cite{Phillips}):
 \begin{theorem}
    If $f(x)$ is a convex function on [0,1], then
    \begin{equation*}
        \sum_{r=0}^n f\left(  \frac{r}{n} \right) \binom{n}{r}x^r(1-x)^{n-r} \geq f(x), \hspace{0.5cm} 0 \leq x \leq 1
    \end{equation*}
    for all $n \geq 1$.
    \label{convexineq}
 \end{theorem}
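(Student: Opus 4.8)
The plan is to recognize the left-hand side as an expectation against a binomial distribution and to conclude by Jensen's inequality. First I would fix $x \in [0,1]$ and $n \geq 1$ and let $S_n$ denote a random variable with the Binomial$(n,x)$ distribution, so that $\mathbb{P}\{ S_n = r \} = \binom{n}{r}x^r(1-x)^{n-r}$ for $r = 0, 1, \dots, n$. With this notation the sum appearing in the theorem is exactly $\mathbb{E}\left[ f\left( S_n/n \right) \right]$, the value of the $n$-th Bernstein operator applied to $f$ at the point $x$. This is the same family of weights that appeared in the proof of Proposition \ref{proposition1}, so the probabilistic reading is the natural one here.

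Next I would record the two elementary binomial identities that drive the argument. The binomial theorem gives $\sum_{r=0}^n \binom{n}{r}x^r(1-x)^{n-r} = 1$, so the weights $p_r := \binom{n}{r}x^r(1-x)^{n-r}$ are nonnegative and form a genuine probability distribution. Second, the mean identity $\sum_{r=0}^n \frac{r}{n}\binom{n}{r}x^r(1-x)^{n-r} = x$, i.e. $\mathbb{E}[S_n/n] = x$, which is the standard fact that a Binomial$(n,x)$ variable has mean $nx$; it is obtained at once from the index shift $r\binom{n}{r} = n\binom{n-1}{r-1}$ followed by the binomial theorem on the shifted sum.

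Finally, since $f$ is convex on $[0,1]$ and the points $r/n$ all lie in $[0,1]$ while the weights $p_r \geq 0$ sum to one, the discrete Jensen inequality yields
\[
    \sum_{r=0}^n f\left( \frac{r}{n} \right) p_r \;\geq\; f\left( \sum_{r=0}^n \frac{r}{n}\, p_r \right) = f(x),
\]
which is precisely the assertion. I would remark that the discrete Jensen inequality used here needs no external machinery beyond convexity itself: it follows by a short induction on the number of nonzero weights, starting from the two-point defining inequality of a convex function. The only point demanding any care is the verification of the mean identity $\mathbb{E}[S_n/n] = x$; beyond that there is no genuine obstacle, since the entire content of the theorem is the observation that Bernstein's operator reproduces the Jensen bound because its weights are centered at $x$.
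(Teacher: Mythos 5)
Your proof is correct and complete. One thing worth knowing: the paper does not actually prove this statement at all --- Theorem \ref{convexineq} is simply quoted from the reference \cite{Phillips} and used as a known fact in Appendix \ref{appendix2}, so there is no in-paper argument to compare against. The route you take is the standard one for this classical property of Bernstein operators: read the sum as $\mathbb{E}\left[ f\left( S_n/n \right) \right]$ for $S_n \sim \mathrm{Binomial}(n,x)$, check that the weights are nonnegative and sum to one by the binomial theorem, verify the mean identity via $r\binom{n}{r} = n\binom{n-1}{r-1}$, and apply the finite-point Jensen inequality. Your closing remark is also the right one to make explicit: the discrete Jensen inequality needs only the two-point convexity inequality plus induction, so no continuity hypothesis on $f$ is required, and the degenerate endpoint cases $x=0$ and $x=1$ are handled automatically since the binomial weights then concentrate at a single point. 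In short, your argument supplies exactly the proof that the paper's citation leaves implicit.
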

 \noindent Thus, by Theorem \ref{convexineq} and the fact that the approximate forward derivative of $V(x)$ is less than $V'(x)$ due to convexity, we get the following estimate
 \begin{align}
    \frac{dG_N}{dx} \leq \frac{1}{V'(G_N) - \frac{D_1}{N}},
    \label{bernsteinineq}
 \end{align}
 for $x > 0$ and $D_1 > 0$ uniformly bounded in $N$. Writing (\ref{bernsteinineq}) as a total derivative less than 1, integrating over $[0,x]$ and using the conditions that $G_N(0) = 0$ and $V(G_N(0)) = 0$ yields
 \begin{align*}
    V(G_N(x)) \leq x + \frac{D_1}{N}G_N(x)
 \end{align*}
 which in turn means that
 \begin{align*}
    G_N(x) \leq V^{-1} \left( x + \frac{D_1}{N}G_N(x) \right) &= V^{-1}(x) + \frac{G_N(x)}{V'(V^{-1}(\tilde{x}))}\frac{D_1}{N}\\
    &\leq V^{-1}(x) + \frac{D_2}{N}
 \end{align*}
 for an $\tilde{x} \in (x, G_N(x)D_1/N)$. Hence, since $G_N$ converges to $V^{-1}(x)$ uniformly (due to \cite{BuchananHildebrandt}) like $\mathcal{O}(1/N)$ so does the integral term in $A_N$.

\end{document}